\documentclass[11pt,a4paper]{article}

\usepackage{amsmath,amsthm,amssymb,amsfonts}

\usepackage{authblk}
\usepackage{setspace}
\usepackage{mathrsfs}
\usepackage{cases}
\usepackage{graphicx}
\usepackage{enumerate}
\usepackage{subcaption} 
%\usepackage{hyperref}
%\usepackage[dvips]{graphicx} % For postscript figures under latex.
%\usepackage[pdftex]{graphicx} % For jpg etc figures under pdflatex.
%\linespread{1.6}
\usepackage{datetime}

\addtolength{\hoffset}{-1cm}
\addtolength{\textwidth}{2cm}
\addtolength{\voffset}{-1cm}
\addtolength{\textheight}{1cm}

\numberwithin{equation}{section}
\tolerance = 10000 % this allows wider interword spaces without complaints

\newcommand{\ii}{{\rm i}}

\newcommand{\dd}{{\rm d}}

\newcommand{\x}{{\rm x}}
\newcommand{\y}{{\rm y}}
\newcommand{\z}{{\rm z}}

\newcommand{\vol}{{\rm vol}}

\newtheorem{thm}{Theorem}
\newtheorem{lemma}[thm]{Lemma}

\newtheorem{rem}[thm]{Remark}

\theoremstyle{definition}
\newtheorem{defn}[thm]{Definition}

\usepackage{color}
 % Comments by Benito
 % Comments by Jorma

\newdateformat{daymonthyear}{\THEDAY \, \monthname[\THEMONTH] \THEYEAR}
\newdateformat{monthyear}{\monthname[\THEMONTH] \THEYEAR}

\begin{document}

\title{Semiclassical theories as initial value problems}
\author{Benito A. Ju\'arez-Aubry$^{1}$\thanks{{\tt benito.juarez@iimas.unam.mx}}, Tonatiuh Miramontes$^{2}$\thanks{{\tt tonatiuh.miramontes@correo.nucleares.unam.mx}} \\ and Daniel Sudarsky$^{2}$\thanks{\tt sudarsky@nucleares.unam.mx}}
\affil{$^{1}$Departamento de F\'isica Matem\'atica \\
Instituto de Investigaciones en Matem\'aticas Aplicadas y en Sistemas, Universidad Nacional Aut\'onoma de M\'exico,\\A. Postal 70-543, Mexico City 045010, Mexico}
%\affil{$^{2}$Department of Mathematics, University of York, York YO10 5DD, UK}
\affil{$^{2}$Departamento de Gravitaci\'on y Teor\'ia de Campos \\Instituto de Ciencias Nucleares, Universidad Nacional Aut\'onoma de M\'exico, \\A. Postal 70-543, Mexico City 045010, Mexico}
\date{\daymonthyear\today}

\maketitle

%\centerline{\textbf{Not for circulation.}}

\begin{abstract}
	
Motivated by the initial value problem in semiclassical gravity, we study the initial value problem of a system consisting of a quantum scalar field weakly interacting with a classical one. The quantum field obeys a Klein-Gordon equation with a potential proportional to the classical field. The classical field obeys an inhomogeneous Klein-Gordon equation sourced by the renormalised expectation value of the squared quantum field in a Hadamard state, $\langle \Psi| \Phi^2 \Psi \rangle$. Thus, the system of equations for the scalar fields reminisces of the semi-classical Einstein field equations with a Klein-Gordon field, where classical geometry is sourced by the renormalised stress-energy tensor of the quantum field, and the Klein-Gordon equation depends on the metric explicitly. We show that a unique asymptotic solution for the system can be obtained perturbatively at any fixed finite order in the weak coupling from initial data provided that the interaction is switched on and off smoothly in a spacetime region to the future of the initial data surface. This allows one to provide ``free" initial data for the decoupled system that guarantees that the Wightman function of the quantum field be of Hadamard form, and hence that the renormalised $\langle \Psi| \Phi^2 \Psi \rangle$ exist (in a perturbative sense) and be smooth. We comment on how to relax the switching of the interaction, which might be relevant for the corresponding problem in semiclassical gravity. 
\end{abstract}

\singlespacing

%=====================================================================================================
% INTRODUCTION
%=====================================================================================================
\section{Introduction}
\label{sec:Intro}

\subsection{Physical motivation and semiclassical gravity}

The interface between gravitation and quantum theory is arguably one of the most challenging   regimes to be studied in theoretical physics. The search for a truly unified understanding thereof, presumably described by a theory of quantum gravity, has been explored from various angles and, although some programmes have made notable progress, we do not yet have at our disposal any well established theory, known to be mathematically consistent, fully workable and able to deal with arbitrarily set problems on the subject. The approach known as semiclassical gravity,  in which the spacetime is treated in the classical language of general relativity, while matter is treated using the well-developed formalism of quantum filed theory in curved  spacetime is, on the other hand, often regraded as unsuitable  \cite{Page1981,  Eppley}, although this perspective has been  disputed  on  various grounds \cite{Carlip, Huggett, Mattingly:2005, Mattingly:2006}.

One should nevertheless be able to consider semiclassical gravity, if not as a fundamental theory, as a useful approximation with a limited range of applicability under suitable conditions. Among those conditions, one would expect the restriction that the scales of curvature involved be well below  those set by the Planck scale. Moreover, it seems natural to expect that if one is given the full theory of quantum gravity, the process of showing that under appropriate conditions one recovers general relativity would involve a succession of approximation steps that would at some point fall in a semiclassical gravity regime.

We should envision as an analogy to this attitude a theoretical analysis taking us, say from the physics of the standard model, expressed as a theory of quantum fields, to that of nuclear and atomic physics, expressed in the language of  non-relativistic quantum mechanics, to that of molecular dynamics and ending up in the hydro-dynamical description of fluids as characterised by something like  the Navier-Stokes equation. Even though we know very well that the latter equations are not a fundamental description of matter, we know they can be trusted to provide a reasonable description of the behavior of fluids under appropriate conditions. We take the same view regarding semi-classical gravity.

%This manuscript is part of a larger programme devoted to the analysis of the conditions for viability of semiclassical gravity as an approximate description of the gravity-quantum matter interface, quite probably in limited, but interesting regimes. 

While semiclassical gravity has been widely studied in the literature, at least since the late seventies \cite{Wald:1978pj}, and is indeed a central motivation for the study of quantum fields in curved spacetimes, its initial value formulation has not been widely investigated, mainly due to the difficulties that the renormalisation of the stress-energy tensor bring about, which include introducing terms to the field equation that do not fall in the standard hyperbolic form for which the classical theorems of well-posedness can be applied, such as Cauchy-Kowalewskaya's \cite{Cauchy, Kowalevsky1875} or Leray's theorems \cite{Leray} (see also \cite{WaldGR}).

Indeed, the classical well-possedness of the  initial value problem for the Einstein field equations coupled to certain matter fields has been long established, following the pioneering work by Choquet-Bruhat \cite{FouresBruhat:1952,ChoquetBruhat:1969,ChoquetBruhat:1971}. In the semiclassical case, this task raises new challenges already for a free Klein-Gordon field -- the renormalisation-induced terms that spoil the hyperbolic form of the equations may lead to so-called runaway solutions for certain initial data. We take the spirit that criteria for avoiding spurious solutions need to be provided. Those issues will be discussed in a forthcoming work by the authors \cite{JKMS}.

We should mention that in certain situations involving high levels of symmetry, such as cosmological spacetimes, Wald's fifth axiom \cite{Wald:1978pj, Wald:1977up} can be enforced through a judicious choice of ambiguous renormalisation terms, and an initial value formulation can be obtained for semiclassical gravity. In cosmology, this strategy has been successfully exploited in the works \cite{Dappiaggi:2008mm, Pinamonti:2013zba, Pinamonti:2013wya}. More recently, the issue of the existence of solutions has been treated quite generally in cosmology by bringing the original system of equations to an infinite tower of ``moment" equations, but the relation between the solutions obtained from solving such tower of equations and initial data is not straightforward \cite{Gottschalk:2018kqt}.

Our long-term aim is to study solutions to full semiclassical gravity as initial value problems, subject to some physical criteria that discards spurious solutions. The physical solutions should consist of \cite{DiezTejedor2011} a {\it Lorentzian spacetime}, $(M, g_{ab})$ (with good properties, e.g. smoothness), a quantum field theory constructed over it and a singled-out physical state. In terms of a concrete representation, it would consist of the construction of the {\it representation of the quantum field}, $\hat \Phi$, as an operator-valued distribution field acting on a Hilbert space, $\mathscr{H}$, and the identification of {\it the physical state} $\Psi \in \mathscr{H}$, which is such that
\begin{subequations}
\label{SemiEinstein}
\begin{align}
 G_{ab}[g] + \Lambda g_{ab} & = 8\pi G_{\rm N} \langle\Psi | \hat T ^{\rm ren}_{ab}[\hat \Phi] \Psi \rangle, \label{SemiEinstein1} \end{align}
 while
 \begin{align}
 P[g] \hat \Phi & = 0, \label{SemiEinstein2}
\end{align}
\end{subequations}
where $P[g]$ denotes a differential operator that depends on the metric, as well as other parameters such the mass and curvature coupling, which defines the field equation for $\hat \Phi$. For example, in the case of a free Klein-Gordon field, $P[g] = \Box_g - m^2 - \xi R$.
 
We should mention at this stage that a necessary condition for the smoothness of $g_{ab}$ is that the physical state vector, $\Psi$, have a two-point function satisfying the Hadamard condition, see. def. \ref{HadState} below. This condition determines the distributional singularities of the two-point function of the quantum field in a precise way that ensures that the renormalisation procedure for the stress-energy tensor can be carried out and leads to a renormalised stress-energy tension that is smooth.\footnote{We should mention that in \cite{Pinamonti:2013zba, Pinamonti:2013wya} this condition has been weakened and lower-regularity cosmologies have been obtained as solutions.}
 
In this paper, we wish to focus on the a generic  difficulty in the obtention of semiclassical solutions, and give perspectives on it. Namely, that one needs to suitably characterise the evolution of the matter state, the quantum field and the spacetime geometry from initial data in a consistent manner that involves treating renormalised quantities -- such as $\langle\Psi | \hat T ^{\rm ren}_{ab}[\hat \Phi] \Psi \rangle$ -- as one evolves initial distributional data for the system \eqref{SemiEinstein}.

In order to gain insight on this rather difficult matter, we study a simplified model that captures some of the main features of eq. \eqref{SemiEinstein}. This model consists of two weakly interacting scalar fields without any gauge symmetries, in which one of them will be playing the role of the ``spacetime metric" in semiclassical gravity and thus will be treated as a classical field. The second field, which will be playing the role of the ``matter field", will be treated at the quantum level. 

\subsection{The scalars model, strategy and results}

In this model, the evolution of the classical field is defined by the Klein-Gordon equation with a source given by the renormalised expectation value of the squared field operator, $\langle\psi | \hat \Phi^2 \psi \rangle$, inspired by the appearance of the renormalised stress-energy tensor on the right-hand side of eq. \eqref{SemiEinstein1}. The equation of motion of the quantum field is the Klein-Gordon equation with a potential proportional to the classical field, inspired by the dependence of the quantum field equation on the metric tensor, cf. \eqref{SemiEinstein2}. See eq. \eqref{ScalarQuant} below and compare with eq. \eqref{SemiEinstein} above.

%We shall consider that the fields are coupled weakly, such that perturbation theory can be used to construct the solutions asymptotically. While we make no claims on the convergence, we shall show that adopting this strategy it is possible to reconstruct the classical and quantum fields up to any fixed order in the coupling, such that the system's field equations hold up to a subleading error term. Such construction depends on an order by order construction of the relevant Green functions of the problem.

A fundamental requirement for our problem is that the initial data for the state be such that a Hadamard state (cf. def. \ref{HadState}) can be defined throughout the spacetime, which imposes constraints on the distributional singularities allowed on the initial-data-surface two-point functions, consistent with the canonical commutation relations that need be satisfied by the quantum field. It will turn out that this condition is best 	studied, by considering the problem in terms of an evolution equation for the field two-point function, which suffices to resolve the semiclassical interaction with the classical field.

If the above state condition is met, one can consistently reconstruct the interacting classical field, and the Wightman function of the field perturbatively in the coupling parameter as asymptotic series, such that at every step of the perturbation a notion of the Hadamard condition holds, and hence the (perturbatively-defined) renormalised expectation value $\langle\Psi | \hat \Phi^2 \Psi \rangle$ exists and is smooth. Note that this perspective differs from the standard perturbative backreaction approach to semiclassical theories, where the staring assumption is that a classical background is given so that one might construct a quantum field theory over it, and then the effects of the quantum fields are treated as perturbations over such background. For example, in semi-classical gravity a background metric is given at leading order and the effects of the renormalised stress-energy tensor are corrections to such metric.

We shall see that a key element in this model, which has no counterpart in semiclassical gravity and helps satisfy the requirements that we just discussed for the initial data, is the possibility to  smoothly turn on the interaction between the fields in the chronological future of the initial data surface, in such a manner that at early times the fields are decoupled, and hence ruled by free dynamics. This provides a way to construct initial data for the coupled system from initial data for the decoupled system, and solve the coupled system for the interacting fields in a perturbative way that guarantees that the classical field be smooth and the quantum field be in a Hadamard state.

\subsection{Organisation of the paper} 
 
This paper is organised as follows. After a brief discussion on the notations adopted in this paper at the end of the current section, in sec. \ref{sec:scalar} we pose the model of the interacting scalars as an initial value problem. We then show in sec. \ref{subsec:Decouple} how to solve the field equations for the quantum (resp. classical) field, whenever the classical (resp. quantum) field is taken as external background, which will be relevant for the strategy that we shall adopt for solving the coupled system, where a perturbative expansion in the coupling parameter reduces the problem to a tower of decoupled systems. Sec. \ref{subsec:Perturb} contains the main result of this paper, which shows in a constructive way how to obtain the solutions to the coupled system as an asymptotic series in the coupling parameter. Our conclusions appear in sec. \ref{sec:Outlook}.
 
\subsection{Notation}
\label{sec:Notation}

By a spacetime, $(\mathcal{M},g)$, we mean a real four-dimensional, connected (Hausdoff, paracompact) $C^\infty$ differentiable manifold, $\mathcal{M}$, equipped with a smooth Lorentzian metric $g$. We restrict our interest to those spacetimes that are time-orientable and assume a choice of time-orientation has been made and we further restrict our spacetimes to be globally hyperbolic \cite{BernalSanchez1, BernalSanchez2}, as we are interested in the initial value formulation in semiclassical theories. Our metric, $g$, has signature $(-,+,+,+)$. For $S\subset \mathcal{M}$, $J^+(S)$ denotes the causal future and $J^-(S)$ the causal past of the subset $S \subset\mathcal{M}$.

We use abstract index-notation, where latin indeces, $a, b, c, \ldots, h$, indicate the covariant and contravariant rank of the tensorial objects. Greek indices, $\mu, \nu, \dots$, are reserved for $4$-dimensional coordinate components. Latin indexes starting at $i$, i.e. $i, j, k, \dots$, will be used for $3$-dimensional coordinate components.

Covariant differentiation is indicated by a semi-colon when notationally advantageous, i.e., for a tensor of rank $(k,l)$, we denote $\nabla_a T_{a_1 \ldots a_k}^{b_1 \ldots b_l} = T_{a_1 \ldots a_k ;a}^{b_1 \ldots b_l}$. 

We set $\hbar = 1$ and $c=1$ unless otherwise specified. We denote spacetime points by Roman characters ($\x, \y, \dots$). Complex conjugation is denoted by an overline. Concrete operators on Hilbert spaces are surmounted with carets, such as $\hat{A}, \hat{B}, \dots$, while elements of an abstract non-commutative algebra are caret-free.

We use standard distributional notation. $\mathcal{F}: C_0^\infty(M) \to \mathbb{C}$ is a distribution, i.e. $\mathcal{F} \in \mathscr{D}'(M)$, if it is a linear functional on the space of smooth functions of compact support on $M$, $C_0^\infty(M)$. When there exits a functional kernel of $\mathcal{F}$, it is a function $F\in L^1(M)$, such that one has the representation $\mathcal{F}(f) = \int_M \dd \vol \, F(\x)f(\x)$, where $\dd \vol$ is the spacetime volume element, which can be locally written as $\dd \vol = (-\det g_{\alpha \beta})^{1/2} d^4x$. Oftentimes, when a distribution has no such functional kernel, one can represent the distribution with the aid of a one-parameter family of $L^1(M)$ functions, $F_{\epsilon}$ with $\epsilon>0$, whose limit as $\epsilon \to 0^+$ (if it exists) is not $L^1(M)$, as $\mathcal{F}(f) = \lim_{\epsilon \to 0^+} \int_M \dd \vol \, F_{\epsilon}(\x)f(\x)$. We refer to such object as the distributional kernel of the distribution $\mathcal{F}$, and will not in general make the distinction between a functional and its distributional kernel. Similar expressions are available for bi-distributions, such as two-point functions. %This is nothing but the ``integrate then take the limit" prescription.

Green functions in classical field theory are denoted by $E^-$ and $E^+$ for the advanced and retarded fundamental Green operators respectively, and by $E = E^- - E^+$ we denote the advanced-minus-retarded fundamental solution. For a given quantum state $\Psi$, $\,^\Psi G^+$ denotes the Wightman (bi-solution) two-point function.

For this two-point function we will often use the ``integrate then take the limit" prescription, whereby 
\begin{align}
\,^\Psi G^+(f,g) & = \lim_{\epsilon \to 0^+} \int_{M\times M} \dd \vol (\x) \, \dd \vol (\y)\,^\Psi G^+_\epsilon(\x,\y)f(\x)g(\y)
\end{align}
is the integral representation of the Wightman function, where $\,^\Psi G^+_\epsilon$ is regular for $\epsilon > 0$.

Given an operator $K: C_0^\infty (\mathcal{M}) \to C^\infty(\mathcal{M})$, and a bi-function, say $f \in C^\infty(\mathcal{M}\times \mathcal{M})$, $K_{\y} f(\x,\y)$ indicates $K$ is acting on $f(\x,\cdot)$ taking $\x \in \mathcal{M}$ as a fixed point and then the resulting function is evaluated at point $\y \in \mathcal{M}$. Similar notation is used for operators acting on bi-distributions.

Finally, $O(\alpha)$ denotes a quantity for which $O(\alpha)/\alpha$ is bounded as $\alpha \to 0$. We say a quantity $Q(\alpha)$ is order-$n$ in $\alpha$ if $n$ is the maximum number such that $Q(\alpha)/\alpha^n$ is bounded as $\alpha \to 0$.

\section{The model of interacting scalar fields}
\label{sec:scalar}

Let $(\mathcal{M},g)$ be a globally-hyperbolic spacetime with compact Cauchy hypersurfaces. Define the Lagrangian density
\begin{equation}
\mathscr{L}(\psi, \phi) =-\frac{1}{2}(\nabla_a\psi\nabla^a\psi + M^2\psi^2)-\frac{1}{2}(\nabla_a\phi\nabla^a\phi + m^2\phi^2) - \lambda \psi\phi^2,
\label{L}
\end{equation}
where $\psi: \mathcal{M} \to \mathbb{R}$ and $\phi: \mathcal{M} \to \mathbb{R}$ are classical fields, $M$ and $m$ are mass parameters for the fields $\psi$ and $\phi$ respectively and $\lambda \in \mathbb{R}$ is a coupling constant. The variation of the action given by the spacetime integral of the Lagrangian density \eqref{L} yields the equations of motion
\begin{subequations}
	\label{ScalarClass}
	\begin{align}
	(\Box - M^2)\psi = \lambda\phi^2, \\
	(\Box - m^2 - 2\lambda\psi)\phi=0, \label{KGClass}
	\end{align}
\end{subequations}
which govern the classical dynamics of the fields $\psi$ and $\phi$, where $\Box = g^{ab}\nabla_a \nabla_b$. 

We now define a semiclassical model where $\psi$ is treated classically and $\phi$ quantised (assuming the spacetime metric as fixed and given), as the formal system
\begin{subequations}
	\label{ScalarQuant}
	\begin{align}
	(\Box - M^2)\psi = \lambda \langle \Psi | \hat \Phi^2 \Psi \rangle_{\text{ren}}, \label{SQ1} \\
	(\Box - m^2 - 2\lambda\psi)\hat \Phi=0, \label{SQ2}
	\end{align}
\end{subequations}
where one replaces the classical observable $\phi$ in eq. \eqref{ScalarClass} by $\hat \Phi$, a suitable operator-valued distribution acting on a Hilbert space. The state $\Psi \in \mathscr{D} \subset \mathscr{H}$ lies in the domain of $\hat \Phi$ (suitably semared), and we assume that it is Hadamard (we will define what a Hadamard state is below), such that the quantity $\langle \Psi | \hat \Phi^2 \Psi \rangle_{\text{ren}}$ on the right-hand side of eq. \eqref{SQ1} is a smooth function obtained from the renormalisation of the expectation value of the operator $\hat\Phi^2$. The system is supplemented with suitable initial data on $\Sigma \subset \mathcal{M}$, an initial value Cauchy surface with normal vector $n^a$. Further assume that the spacetime foliated by Cauchy hypersurfaces parametrised by a global time function $T$, and such that the initial $\Sigma$ corresponds to $T=0$. The initial data is given by\footnote{Note that the name $\pi$ suggests an identification with the cannonical conjugate momenta $\pi_c$. Indeed, they are related by $\pi_c = \sqrt{\det h_{ij}} \pi$.}
\begin{subequations}
	\label{ScalarQuantData}
	\begin{align}
	\text{Initial data for } \psi : \quad & \left\{
	\begin{array}{rl}
	\psi|_\Sigma = & \!\!\! \varsigma , \\
	\nabla_n \psi|_\Sigma = & \!\!\! \varpi,
	\label{psidata1}
	\end{array}
	\right. \\
	%&\text{together with} \\
	\text{Initial data for } \hat{\Phi}: \quad & \left\{
	\begin{array}{rl}
	\hat \Phi|_\Sigma = \hat \varphi, \\
	\nabla_n \hat \Phi|_\Sigma = \hat \pi, \\
	\Psi|_\Sigma = \Psi.
	\label{Phidata}
	\end{array}
	\right.
	\end{align}
\end{subequations}
The third initial condition in the data \eqref{Phidata} will be taking as indicating that we will be working in the Heisenberg picture, i.e., that the states do not evolve but the field operators do. We demand that the initial data \eqref{Phidata} satisfies the CCR algebra relations, ie, for $\underline{\x}, \underline{\y} \in \Sigma$, 
\begin{enumerate}[\hspace{0.3 \linewidth}(i)]
	\item $[\hat \varphi(\underline{\x}), \hat \pi(\underline{\y})] = \ii \delta(\underline{\x}, \underline{\y})/\det(h_{ij}(\underline{\x})) 1\!\!1$.
	\item $[\hat \varphi(\underline{\x}), \hat\varphi(\underline{\y})] = 0$.
	\item $[\hat \pi(\underline{\x}), \hat\pi(\underline{\y})] = 0$.
\end{enumerate}
where $h_{ij}$ is the induced metric on $\Sigma$.

With sufficient symmetries available, an approach to solve the system \eqref{ScalarQuant} with initial data \eqref{ScalarQuantData} would be to write a parametrisation of the function $\psi$ describing it in terms of few parameters, make mode expansions for the field $\hat \Phi$ subject to the wave equation \eqref{SQ2} for arbitrary value of such parameters and where positive and negative frequency modes are distinguished. In turn, this allows to construct a generic Hilbert Fock space $\mathscr{H}$ upon which $\hat \Phi$ acts densely. Then it is necessary to find a physical state $\Psi \in \mathscr{H}$, which enforces equation \eqref{SQ1} while adjusting the parameters in such a way that conditions \eqref{ScalarQuantData} are met. This method to deal with a system possessing similar self-referring features has been used successfully in the context of simple semiclassical gravity settings involving inflation \cite{DiezTejedor2011,Canate2018}, but the technique has obvious limitations.

In the absence of symmetries, when constructing a semiclassical solution several difficulties appear. The first one is of course the intrinsic non-linearity brought about by the self-referential aspects of the problem. In full semiclassical gravity, for instance, a known difficulty is the fact that the spacetime metric depends on the quantum field and its state and that the quantum field construction depends on the spacetime metric. This fact is represented in the present problem by the dependence of $\psi$ on the quantum field and its state and that the quantum field theoretic construction depends on $\psi$. The second one is that separation in negative and positive frequency modes is arbitrary in general curved spacetimes, where no canonical choice is available, leading to unitarily inequivalent constructions\footnote{However, Fell's theorem \cite{Fell1960} proves that even nonunitarily equivalent constructions yield indistinguishable finite precision predictions for suitable matching pair of states in any construction.}. A third issue is that finding the necessary physical state that satisfies the semiclassical equations in this interacting setting is non-trivial because only for a special class of states -- so-called Hadamard states -- the term $\langle \Psi | \hat \Phi^2 \Psi \rangle_{\text{ren}}$ would make sense and, moreover, it is necessary to carry on a renormalisation procedure for this expression while simultaneously solving the coupled problem.

An avenue to address these difficulties, while specifying at the same time the necessary information of the state $\Psi$, is to resort to the algebraic approach, where a state is defined by its $n$-point functions. In the case of a quasi-free state, it suffices to specify its two-point function,
\begin{equation}
\,^\Psi G^+(u,v)= \langle \Psi | \hat \Phi (u) \hat \Phi (v)\Psi \rangle, \label{WightmanState}
\end{equation}
where $u, v \in C_0^\infty(\mathcal{M})$. 

The integral kernel of the Wightman two-point function, denoted by $\,^\Psi G^+_{\epsilon}(\x,\x')$, is a bi-solution to equation \eqref{KGClass} (in the relevant $\epsilon \to 0^+$ limit). If the two-point function has the correct singularity structure, then, Hadamard renormalisation can be used to define $\langle \Psi | \hat \Phi^2 \Psi \rangle_{\text{ren}}$ with the point-splitting and Hadamard subtraction procedure, as we shall discuss below. 

Henceforth, we treat the problem as one in which one seeks to determine the classical field configuration and the quantum field Wightman function. This entails replacing the initial data \eqref{Phidata} with initial data for the two-point function and its derivatives, i.e., to provide the expectation values of the equal-time fields $\hat \varphi$ and $\hat \pi$. 

We trade the system \eqref{ScalarQuant} with initial data \eqref{ScalarQuantData} by
\begin{subequations}
	\label{ScalarQuant2pt}
	\begin{align}
	(\Box_\x - M^2)\psi(\x) = \lambda\langle \Psi | \hat \Phi^2 (\x) \Psi\rangle_{\text{ren}} , \label{SQ1-2pt} \\
	(\Box_\x - m^2 - 2\lambda\psi(\x))\,^\Psi G^+(\x,\x')=0, \label{SQ2-2pt1}\\
	(\Box_{\x'} - m^2 - 2\lambda\psi(\x'))\,^\Psi G^+(\x,\x')=0, \label{SQ2-2pt2}
	\end{align}
\end{subequations}
with initial data on $\Sigma$ given by
\begin{subequations}
	\label{ScalarQuant2ptData}
	\begin{align}
	& \left\{
	\begin{array}{rl}
	\psi|_\Sigma = & \!\!\! \varsigma, \\
	\nabla_n \psi|_\Sigma = & \!\!\! \varpi,
	\label{psidata}
	\end{array}
	\right. \\
	%&\text{together with} \\
	& \left\{
	\begin{array}{rl}
	\,^\Psi G^+|_\Sigma = \,^\Psi G^+_{\varphi \varphi}, \\
	(\nabla_n \otimes 1)\,^\Psi G^+|_\Sigma = \,^\Psi G^+_{\pi \varphi}, \\
	(1 \otimes \nabla_n)\,^\Psi G^+|_\Sigma = \,^\Psi G^+_{\varphi \pi}, \\
	(\nabla_n \otimes \nabla_n)\,^\Psi G^+|_\Sigma = \,^\Psi G^+_{\pi \pi}. 
	\label{G+data}
	\end{array}
	\right.
	\end{align}
\end{subequations}

The initial data prescribed in eq. \eqref{G+data} defines bi-distributions on the initial surface, $\,^\Psi G^+_{\varphi \varphi},\,^\Psi G^+_{\pi \pi} : C_0^\infty(\Sigma) \times C_0^\infty(\Sigma) \to \mathbb{R}$ and $\,^\Psi G^+_{\varphi \pi},\,^\Psi G^+_{\pi \varphi} : C_0^\infty(\Sigma) \times C_0^\infty(\Sigma) \to \mathbb{C}$, which are nothing but the expectation values of equal-time fields and momenta, e.g., for $F, G \in C_0^\infty(\Sigma)$, given a concrete representation for the field operators in a Hilbert space, $\,^\Psi G^+_{\varphi \varphi}(F,G) = \langle \Psi | \hat \varphi(F) \hat \varphi(G) \Psi \rangle$. We demand that the initial data satisfy the CCR: For $\underline{\x}, \underline{\y} \in \Sigma$, 
\begin{enumerate}[\hspace{0.2 \linewidth}(i)]
	\item $\,^\Psi G^+_{\varphi \pi}(\underline{\x}, \underline{\y}) - \,^\Psi G^+_{\pi \varphi}(\underline{\y}, \underline{\x}) = \ii \delta(\underline{\x}, \underline{\y})/\det (h_{\alpha\beta}(\underline{\x}))$, \label{ConditionGi}
	\item $\,^\Psi G^+_{\varphi \varphi}(\underline{\x}, \underline{\y}) - \,^\Psi G^+_{\varphi \varphi}(\underline{\y}, \underline{\x}) = 0$, \label{ConditionGii}
	\item $\,^\Psi G^+_{\pi \pi}(\underline{\x}, \underline{\y}) - \,^\Psi G^+_{\pi \pi}(\underline{\y}, \underline{\x}) = 0$. \label{ConditionGiii}
\end{enumerate}
We also require data to observe hermiticity in the form
\begin{equation}
	\,^\Psi G^+_{\pi \varphi}(\underline{\x}, \underline{\y}) = \overline{\,^\Psi G^+_{\varphi \pi}(\underline{\y},\underline{ \x})}. \label{DataHermit}
\end{equation}

Note that the way in which we have formulated the problem \eqref{ScalarQuant2pt}, subject to initial data \eqref{ScalarQuant2ptData}, does not single out a preferred Hilbert space where $\hat\Phi$ acts. It allows to solve for the potential $\psi$ and obtain a Wightman function corresponding to the state $\Psi$, cf. eq. \eqref{WightmanState}, for all points in the spacetime, from which the GNS construction can be used to concretely represent field operators.

\section{Initial value formulation for the decoupled subsystems}
\label{subsec:Decouple}

The strategy that we will follow to solve the coupled system \eqref{ScalarQuant2pt} will consist of devising a perturbative approach in the coupling parameter, such that at each order the equations for the classical field and the quantum field decouple. In this section, we show how the decoupled systems have, each on their own, a well posed initial value formulation. 
\subsection{Classical field with fixed external source} \label{sec:ClassicalIVP}

Let us begin by discussing the problem \eqref{SQ1-2pt} considering that the right hand side of eq. \eqref{SQ1-2pt} is a fixed external, smooth source, $J(\x)$, ie,
\begin{equation}
(\Box_\x - M^2)\psi(\x) = J(\x), \label{KGExternalSource}
\end{equation}
subject to initial data \eqref{psidata}, 
\begin{align}
\psi |_{\Sigma} &= \varsigma, &  \nabla_ n \psi |_{\Sigma} &= \varpi. %\qquad\qquad\qquad  \eqref{psidata}
\end{align}

This is the initial value problem for a quasi-linear, second order hyperbolic partial differential equation. The existence and uniqueness of solutions for this problem was first proven by Cauchy \cite{Cauchy} and Kowalevskaya \cite{Kowalevsky1875}  in the analytic case, while the smooth case was proven by Leray \cite{Leray}  (Theorem 10.1.3 in Wald \cite{WaldGR}), see also Hawking \& Ellis \cite[Sec. 7.4]{Hawking1973}. In particular, existence and uniqueness of the fundamental advanced and retarded Green operators follow from the fact $\Box - M^2$ is a normally-hyperbolic operator (See, e.g., Wald \cite[Theorem 4.1.2]{Wald:1995yp}). Furthermore, all spacelike-compact, smooth solutions can be explicitly obtained with the aid of Green function methods, as we can see by the following lemma:
\begin{lemma} \label{LemmaPsi}
	Let $E^+_M$ be the fundamental retarded Green operator for the differential operator $\Box - M^2$. A representation for the solution to eq. \eqref{SQ1-2pt} with smooth initial data \eqref{psidata} and a smooth source $J$ is given by
	\begin{equation}
	\psi =\mathcal{Q} + E_M^+ J, \label{Greenpsi}
	\end{equation}
	where $\mathcal{Q}$ is a $C^\infty$ solution to 
	\begin{equation}
	(\Box - M^2)\mathcal{Q} = 0,
	\end{equation}
	subject to initial data of compact support given on a Cauchy hypersurface $\Sigma$ by
	\begin{subequations}
		\begin{align}
		\mathcal{Q}|_{\Sigma} &= \varsigma - E_M^+J|_\Sigma, \\
		\nabla_n \mathcal{Q} |_{\Sigma} &= \varpi - \nabla_n E_M^+J|_\Sigma.
		\label{fIC}
		\end{align}
	\end{subequations}
\end{lemma}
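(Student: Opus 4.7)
The plan is to exploit the linearity of the operator $\Box - M^2$ together with the defining property of the retarded Green operator $E_M^+$, namely that $(\Box - M^2) E_M^+ J = J$. The claim is essentially that any solution of the inhomogeneous Cauchy problem \eqref{KGExternalSource} can be written as a particular solution $E_M^+ J$ plus a homogeneous piece $\mathcal{Q}$ whose Cauchy data absorbs the $\Sigma$-restriction of the particular solution. Verification in each direction is straightforward once the standard theory of normally hyperbolic operators on globally hyperbolic spacetimes is in hand.

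First I would invoke Leray's theorem for $\Box - M^2$ on the globally hyperbolic background $(\mathcal{M}, g)$ to obtain existence and uniqueness of the smooth retarded and advanced Green operators $E^\pm_M$ and well-posedness of the Cauchy problem with smooth initial data; this is exactly the setting already quoted in the text. The standard support property $\mathrm{supp}(E_M^+ J) \subset J^+(\mathrm{supp}(J))$ then guarantees that, for $J$ of spacelike-compact support, $E_M^+ J$ is itself spacelike compact and in particular its restriction to $\Sigma$ is compactly supported.

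Next I would set $\mathcal{Q} := \psi - E_M^+ J$ and verify directly that $\mathcal{Q}$ satisfies the homogeneous equation: by linearity,
\begin{equation*}
(\Box - M^2)\mathcal{Q} = (\Box - M^2)\psi - (\Box - M^2) E_M^+ J = J - J = 0.
\end{equation*}
Restricting to $\Sigma$ and taking the normal derivative yields precisely the Cauchy data stated in the lemma, and compact support follows from the observation above (together with compact support of $\varsigma, \varpi$, which is the natural setting on $\Sigma$). Conversely, given the unique smooth $\mathcal{Q}$ produced by Leray's theorem applied to the homogeneous equation with the shifted compactly-supported data, the function $\psi := \mathcal{Q} + E_M^+ J$ is $C^\infty$ and satisfies $(\Box - M^2)\psi = J$ with $\psi|_\Sigma = \varsigma$ and $\nabla_n \psi|_\Sigma = \varpi$, which by uniqueness coincides with the original Cauchy solution.

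There is no genuine obstacle here; the content of the lemma is simply the useful reformulation that solving the inhomogeneous Cauchy problem is equivalent to solving a free Cauchy problem on $\Sigma$ with data shifted by the $\Sigma$-trace of $E_M^+ J$. This decomposition is exactly what will be leveraged when the source is built iteratively from lower-order quantities in the perturbative scheme of sec.~\ref{subsec:Perturb}, so the only aspect worth stating carefully is the smoothness and support bookkeeping, which is immediate from the Green-operator properties.
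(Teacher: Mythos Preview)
Your proposal is correct and follows essentially the same approach as the paper: both verify directly that $\psi=\mathcal{Q}+E_M^+J$ satisfies the inhomogeneous equation by applying $\Box-M^2$ and using $(\Box-M^2)E_M^+J=J$, then check the initial conditions by restricting to $\Sigma$, with existence and uniqueness of $\mathcal{Q}$ supplied by Leray's theorem. Your version is slightly more explicit in doing both directions and in tracking support properties, but the argument is the same.
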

\begin{proof}
	Existence and uniqueness of $\mathcal{Q}$ is guaranteed by  Leray's theorem. Aplication of operator $\Box - M^2$ on \eqref{Greenpsi} results in 
	\begin{equation}
	(\Box_\x - M^2) \psi = J,
	\end{equation}
	and therefore $\psi$ is solution to \eqref{KGExternalSource}. Evaluation of \eqref{Greenpsi} at $\Sigma$ yields
	\begin{subequations}
		\begin{align}
		\psi|_{\Sigma} &=  \varsigma,\\
		\nabla_n \psi|_{\Sigma} &= \varpi,
		\end{align}
	\end{subequations}
	and therefore $\psi$ satisfies the  initial data boundary conditions \eqref{psidata}.
\end{proof}
Note that Leray's theorem already implies this is the unique solution for \eqref{KGExternalSource} subject to initial data \eqref{psidata}.

We can further express $\mathcal{Q}$ in terms of the initial data by means of Green theorem, yielding (see Appendix \ref{sec:IVF}, eq. \eqref{UInitial}), 
\begin{align}
\mathcal{Q} (\x) = \int_{\Sigma} \, \dd\Sigma ({\underline{\y}}) \,  \Big[ &E_M(\x,{\underline{\y}}) (\varpi({\underline{\y}}) - (\nabla_n (E_M^+ J))({\underline{\y}})) \nonumber \\
&-  (\varsigma ({\underline{\y}})- (E_M^+ J)({\underline{\y}}))(\nabla_n)_{{\underline{\y}}} E_M(\x,{\underline{\y}})\Big], \label{ClassicalHom}
\end{align}
where $E_M^-$ is the fundamental advanced Green operator for $\Box - M^2$ and $E_M = E_M^+ - E_M^-$. Thus, 
\begin{align}
\psi(\x) = \int_{\Sigma} \, \dd\Sigma ({\underline{\y}}) \, \Big[ &E_M(\x,{\underline{\y}}) (\varpi({\underline{\y}}) - (\nabla_n (E_M^+ J))({\underline{\y}})) \nonumber \\
&-  (\varsigma ({\underline{\y}})- (E_M^+ J)({\underline{\y}}))(\nabla_n)_{{\underline{\y}}} E_M(\x,{\underline{\y}})\Big] +  E_M^+ J, \label{SolClassSource}
\end{align}
or in condensed notation,
\begin{equation}
\psi = E_M(\varpi - \Pi)  - ((1\otimes \nabla_n)E_M )(\varsigma - \mathcal  J) +  E_M^+ J, \label{SolClassSourceCompact}
\end{equation}
where $\Pi = (\nabla_n (E_M^+ J))|_{\Sigma}$ and $\mathcal  J=( E_M^+J)|_\Sigma$.

\subsection{Quantum field with fixed external potential}
We now focus on the subsystem of eq. \eqref{SQ2-2pt1} and \eqref{SQ2-2pt2} subject to initial data defined by eq. \eqref{G+data}. This problem is well posed whenever $\psi$ is taken to be a smooth, fixed external potential $\vartheta$. This is so because the differential operator $(\Box - m^2 - 2\lambda\vartheta)$ is of normally hyperbolic form, and hence it has distinguished advanced and retarded fundamental Green operators, $E^\pm_{\vartheta}$, uniquely defined by their support properties, just as we have seen in the previous section. From now on, we will denote the functional dependence on this potential by $[\vartheta]$ or by a right subscript in the case of the propagators $E^{\pm}_{\vartheta}$. 

Whenever the data set \eqref{G+data} corresponds to initial data for a Wightman function, the complete function $\,^\Psi G^+ [\vartheta]$ can be extended to all spacetime with the aid of the advanced-minus-retarded propagator $E_{\vartheta} =E^-_{\vartheta} - E^+_{\vartheta}$ in the following way. Let $f, g \in C_0^\infty(\mathcal{M})$, then 
\begin{align}
\,^\Psi G^+[\vartheta](f,g) & = \,^\Psi G^+_{\varphi \varphi}\Big((\nabla_n E_{\vartheta}f )|_\Sigma, (\nabla_n E_{\vartheta}g )|_\Sigma\Big) - \,^\Psi G^+_{\varphi \pi} \Big((\nabla_n E_{\vartheta}f)|_\Sigma, (E_{\vartheta}g)|_\Sigma\Big) \nonumber\\
&\quad - \,^\Psi G^+_{\pi \varphi} \Big((E_{\vartheta}f)|_\Sigma, (\nabla_n E_{\vartheta}g)|_\Sigma \Big) + \,^\Psi G^+_{\pi \pi} \Big((E_{\vartheta}f)|_\Sigma, (E_{\vartheta}g)|_\Sigma\Big).
\label{G+IVP}
\end{align}

The statement of eq. \eqref{G+IVP} is naturally in correspondence with the standard equivalence between smeared spacetime quantum fields and symplectically smeared quantum fields (and momenta) on a Cauchy surface in globally hyperbolic spacetimes (see, e.g. \cite[Lemma A.1]{Dimock}), i.e., for any $(\mathcal{M},g)$ globally hyperbolic and $\Sigma \subset \mathcal{M}$ a Cauchy surface, for $f \in C_0^\infty(M)$, the fact that $ \hat \Phi $ is a distributional solution to the field equations can be expressed by
\begin{equation}
\hat \Phi (f) = \hat \pi(E_{\vartheta}f |_\Sigma)- \hat \varphi(\nabla_n E_{\vartheta}f |_\Sigma),\label{SyplSmear}
\end{equation}
which is the quantum version of eq. \eqref{UInitial} .

Equation \eqref{G+IVP} is simply the two-point function generalisation of \eqref{SyplSmear} (for details check Appendix \ref{sec:IVF}), showing that the initial value problem for the Wightman function is well posed whenever $\vartheta$ is taken as an external potential.  This result can be rephrased in terms of the standard integral kernel notation for distributions if we consider the regularised two point functions $\,^\Psi G^+_{\varphi \varphi\, \epsilon}(\underline{\x},\underline{\y})$, $\,^\Psi G^+_{\varphi \pi\, \epsilon} (\underline{\x},\underline{\y})$, $\,^\Psi G^+_{\pi \varphi\, \epsilon}(\underline{\x},\underline{\y})$ and $\,^\Psi G^+_{\pi \pi\, \epsilon}(\underline{\x},\underline{\y})$, defined from the restriction of the distributional kernel of $\,^\Psi G^+$ to the hypersurface $\Sigma$. As an explicit example, in the case of Minkowski spacetime with vanishing $\vartheta$ a privileged notion of vacuum ${\Omega_M}$ can be defined, and for this state, the Wightman function restricted to a flat fixed inertial time Cauchy hypersurface $\Sigma$ reduces to \cite{Weinberg}
\begin{align}
	\,^{\Omega_M} G^+_{\varphi \varphi} (F,G) &= \lim\limits_{\epsilon\rightarrow 0^+} \int_{\Sigma\times\Sigma} d^3 \underline{\x}\, d^3 \underline{\y}\, F(\underline{\x})G(\underline{\y}) \left(\frac{m K_1 \left(m\sqrt{|\underline{\x}-\underline{\y}|^2+\epsilon^2}\right)}{4\pi^2 \sqrt{|\underline{\x}-\underline{\y}|^2+\epsilon^2}}\right), 
\end{align}
where we denote by an underlining of the points on $\Sigma$ and by $|\underline{\x}-\underline{\y}|$ the standard euclidean distance between the two points. Thus, the regularised distributional kernel $\,^{\Omega_M} G^+_{\varphi \varphi\,\epsilon}$ is explicitly given by
\begin{align}
	\,^{\Omega_M} G^+_{\varphi \varphi\,\epsilon} (\underline{\x},\underline{\y}) = \frac{m K_1 \left(m\sqrt{|\underline{\x}-\underline{\y}|^2+\epsilon^2}\right)}{4\pi^2 \sqrt{|\underline{\x}-\underline{\y}|^2+\epsilon^2}}.
	\label{Minkophiphi}
\end{align}

The remaining regularised distributional kernels can be similarly obtained. Analog expressions can be obtained for a different state $\Psi$ in Minkowski spacetime for the case with vanishing potential. In this case, $\,^{\Psi} G^+_{\varphi \varphi\,\epsilon}$ may differ from $\,^{\Omega_M} G^+_{\varphi \varphi\,\epsilon}$ by a smooth bi-function and still possess a good singular structure. 

As we have mentioned above, physically-acceptable Wightman functions cannot be allowed to have arbitrary singularity structure. To characterise the acceptable singularity structure of such two-point functions in general, globally hyperbolic spacetimes, we need to introduce the notion of a two-point function having {\it Hadamard form}. To this end, let us first present the definition of the Hadamard fundamental solution:
\begin{defn}
	\label{def:Hadamard}
	A Hadamard fundamental solution, $H_\ell$, is a (distributional) solution to the equation
	\begin{equation}
	P_\x H_\ell (\x,\x') = 0,
	\label{eqPH}
	\end{equation}
	where $P_\x$ is a second order, normally hyperbolic linear operator acting on the first argument, whose distributional kernel admits a short distance expansion of the form
	\begin{equation}
	H_{\ell \, \epsilon}(\x,\x') = \frac{1}{8\pi^2}\left[\frac{\Delta^{1/2}(\x,\x')}{\sigma_\epsilon(\x,\x')} + v(\x,\x') \ln \left(\sigma_\epsilon(\x,\x')/\ell^2 \right) + w^{\ell}(\x, \x')\right] \label{HadamardF}
	\end{equation}
	in a convex region $D\ni \x, \x'$, where $\sigma_\epsilon(\x,\x') = \sigma(\x, \x') + 2 \ii [T(\x)-T(\x')]\epsilon + \epsilon^2$ is the regularised half-squared distance along the unique geodesic going from $\x$ to $\x'$, $T \in C^{\infty}(\mathcal{M})$ is a time function,
	$\Delta$ the van Vleck-Morette determinant\footnote{We also have $\Delta(\x,\x') \in C^\infty (\Sigma \times \Sigma)$.} and $v$ and $w^\ell$ are smooth bi-functions determined as formal power series in $\sigma$ by the {\it Hadamard recursion relations} (derived from eq. \eqref{eqPH}) for the expansion coefficients, subject to an arbitrarily-specified, smooth initial bi-function, $w_0$, for the recursion relations of $w^\ell$ and where $\ell$ is an arbitrary length scale.
\end{defn}

We are ready to define a Hadamard state.

\begin{defn}
\label{HadState}
A state of the Klein-Gordon field is called a {Hadamard state} if its Wightman two-point function differs from the Hadamard fundamental solution of the Klein-Gordon equation by a smooth bi-function. The Wightman two-point function of this state is said to have Hadamard form. 
\end{defn}

For example, in Minkowski spacetime with vanishing potential, the Wightman two-point function $\,^{\Omega_M} G^+ (\x, \y)$ admits a short distance expansion:
\begin{align}
	\,^{\Omega_M} G^+ (\x, \y) &= \frac{1}{8\pi^2}\Bigg[\frac{1}{\sigma(\x,\y) }+\frac{m^2}{2} \log
	\left(\frac{\sigma(\x,\y) }{\ell ^2}\right)+\frac{m^2}{2} \left[\log \left(\frac{m^2 \ell ^2}{2}\right) +2 \gamma -1\right]\nonumber \\
	&\quad + \sigma(\x,\y) \frac{m^4 }{16} \left[2 \log \left(\frac{\sigma(\x,\y) }{\ell ^2}\right)+\log
	\left(\frac{m^4 \ell ^4}{4}\right)+4 \gamma
	-5\right] +O\left(\sigma^2(\x,\y)\right) \Bigg], \label{MinkoVacHad}
\end{align}
which is of the Hadamard form, and therefore, $\Omega_M$ is a Hadamard state. In this expression, $\gamma$ is the Euler-Mascheroni constant.

At this stage we should emphasise that, while the Hadamard fundamental solution is not a bi-solution in the same way that the Wightman function is for the Klein-Gordon equation, it fails to be so by a smooth bi-function \cite{Fulling:1981,Moretti:2014}.

We henceforth demand that the data defined by eq. \eqref{G+data}, with fixed (not necessarily vanishing) $\vartheta$ in a fixed globally-hyperbolic spacetime, be such that, at the level of distributional kernels,
\begin{subequations}
	\label{G+dataHad}
	\begin{align}
	\,^\Psi G^+_{\varphi \varphi} - H_{\ell } [\vartheta]|_\Sigma \in C^\infty(\Sigma \times \Sigma), \\
	\,^\Psi G^+_{\pi \varphi} - (\nabla_n \otimes 1) H_{\ell} [\vartheta]|_\Sigma \in C^\infty(\Sigma \times \Sigma), \\
	\,^\Psi G^+_{\varphi \pi} - (1 \otimes \nabla_n) H_{\ell} [\vartheta]|_\Sigma \in C^\infty(\Sigma \times \Sigma),\\
	\,^\Psi G^+_{\pi \pi} - (\nabla_n \otimes \nabla_n) H_{\ell} [\vartheta]|_\Sigma \in C^\infty(\Sigma \times \Sigma),
	\end{align}
\end{subequations}
where $ H_{\ell} [\vartheta]$ is a Hadamard fundamental solution for the Klein-Gordon operator $(\Box - m^2 - 2\lambda\vartheta)$.

The conditions ennumerated in \eqref{G+dataHad} state nothing but the fact that the singularity structure of the initial data for $\,^\Psi G^+[\vartheta]$ is the same as that of the Hadamard fundamental solution. The singularities of the Hadamard fundamental solution ``initial data" are characterised by its wave-front set, which is propageted by the Klein-Gordon equation according to the Propagation of singularities theorem of Duistermaat and H\"ormander \cite[sec. 6.1]{duistermaat1972}, yielding the micro-local spectrum condition of Radzikowski \cite{Radzikowski:1996pa} for $ H_\ell [\vartheta]$. It follows from the regularity conditions in eq. \eqref{G+dataHad} that the Wightman function, $\,^\Psi G^+$, will also satisfy the micro-local spectrum condition, therefore ensuring the state is Hadamard. 

In view of the discussion above, for a Hadamard state, it is possible to express the distributional kernel of a two-point function $\,^\Psi G^+(\x,\y)$ as
\begin{align}
\,^\Psi G^+_\epsilon[\vartheta](\x, \y) = & \int_{\Sigma \times \Sigma} \dd \Sigma(\underline{\x}') \dd \Sigma (\underline{\y}') \,^\Psi G^+_{\varphi\varphi\,\epsilon}(\underline{\x}',\underline{\y}') \left( (\nabla_{n})_{\underline{\x}'} E_{\vartheta}(\underline{\x}',\x)\right) \left( ( \nabla_{n})_{\underline{\y}'} E_{\vartheta}(\underline{\y}',\y)\right) \nonumber \\
& \quad - \int_{\Sigma \times \Sigma} \dd \Sigma(\underline{\x}') \dd \Sigma (\underline{\y}') \,^\Psi G^+_{\varphi\pi\,\epsilon}(\underline{\x}',\underline{\y}') \left( (\nabla_{n})_{\underline{\x}'} E_{\vartheta}(\underline{\x}',\x)\right) E_{\vartheta}(\underline{\y}',\y) \nonumber \\
& \quad - \int_{\Sigma \times \Sigma} \dd \Sigma(\underline{\x}') \dd \Sigma (\underline{\y}') \,^\Psi G^+_{\pi\varphi\,\epsilon}(\underline{\x}',\underline{\y}') E_{\vartheta}(\underline{\x}',\x) \left( (\nabla_{n})_{\underline{\y}'} E_{\vartheta}(\underline{\y}',\y)\right) \nonumber \\
& \quad + \int_{\Sigma \times \Sigma} \dd \Sigma(\underline{\x}') \dd \Sigma (\underline{\y}') \,^\Psi G^+_{\pi\pi\,\epsilon}(\underline{\x}',\underline{\y}') E_{\vartheta}(\underline{\x}',\x)E_{\vartheta}(\underline{\y}',\y),
\label{Gxy}
\end{align}
such that equation \eqref{G+IVP} reads
\begin{equation}
	\,^\Psi G^+[\vartheta](f,g) = \lim\limits_{\epsilon \rightarrow 0^+}\int_{M\times M} \dd \vol(\x) \dd \vol(\y) \, f(\x) g(\y) \,\,^\Psi G^+_\epsilon[\vartheta](\x, \y).
\end{equation}

We collect these observations in the following theorem:
\begin{thm}
	The problem defined by the PDE system consisting of eq. \eqref{SQ2-2pt1} and \eqref{SQ2-2pt2}, with a fixed $\psi = \vartheta \in C^\infty(\mathcal{M})$, subject to initial data defined by eq. \eqref{G+data} satisfying the conditions ennumerated in \eqref{G+dataHad}, has a unique bi-solution, $\,^\Psi G^+ [\vartheta]: C_0^\infty(\mathcal{M} \times \mathcal{M}) \to \mathbb{C}$, determined by eq. \eqref{Gxy}, which is the distributional kernel of a Wightman function defining a Hadamard state.
	\qed \label{Th2}
\end{thm}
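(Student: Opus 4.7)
The plan is to prove the theorem in three stages: (i) uniqueness of any bi-solution of \eqref{SQ2-2pt1}--\eqref{SQ2-2pt2} satisfying the prescribed Cauchy data \eqref{G+data}; (ii) verification that the explicit formula \eqref{Gxy} provides such a bi-solution; and (iii) verification that the resulting bi-distribution is the two-point function of a Hadamard state.

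For (i) and (ii), I would exploit the normal hyperbolicity of $(\Box - m^2 - 2\lambda \vartheta)$. Fixing the second argument and using \eqref{SQ2-2pt1}, the standard Cauchy problem for a normally hyperbolic operator (Leray's theorem, as already invoked in Lemma \ref{LemmaPsi}) uniquely determines the bi-solution in the first argument from the two traces $\,^\Psi G^+|_\Sigma$ and $(\nabla_n \otimes 1)\,^\Psi G^+|_\Sigma$ of \eqref{G+data}. Repeating the argument in the second argument via \eqref{SQ2-2pt2} then pins down the bi-distribution uniquely. For existence, I would note that \eqref{Gxy} is precisely the bi-variable analogue of the classical reconstruction formula \eqref{SolClassSourceCompact} with vanishing source, applied successively in $\x$ and $\y$, equivalently, the bi-distributional version of the symplectic identity \eqref{SyplSmear}. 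Acting with $(\Box_\x - m^2 - 2\lambda\vartheta(\x))$ on \eqref{Gxy} annihilates the kernel because $E_\vartheta(\underline{\x}',\x)$ is a solution in $\x$ for $\underline{\x}' \in \Sigma$; reproduction of the Cauchy data on $\Sigma$ follows from the standard restriction identities for $E_\vartheta$ and $(\nabla_n \otimes 1) E_\vartheta$ on $\Sigma$. The CCR and hermiticity conditions on the initial data are preserved off $\Sigma$ by the antisymmetry and reality of $E_\vartheta$, so the resulting bi-distribution is a legitimate candidate Wightman function.

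For (iii), I would invoke the Duistermaat--H\"ormander propagation of singularities theorem. The conditions \eqref{G+dataHad} state that on $\Sigma$ the initial data of $\,^\Psi G^+[\vartheta]$ coincide, modulo smooth bi-functions, with those of a Hadamard fundamental solution $H_\ell[\vartheta]$. Since $\,^\Psi G^+[\vartheta]$ exactly, and $H_\ell[\vartheta]$ up to a smooth bi-function, solve the Klein--Gordon bi-equation, their difference satisfies a pair of normally hyperbolic equations with smooth inhomogeneities and smooth Cauchy data on $\Sigma$. Propagation of singularities applied in each argument separately then gives smoothness of the difference on all of $\mathcal{M}\times \mathcal{M}$, which by Radzikowski's equivalence with the microlocal spectrum condition is the Hadamard property.

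I expect the main obstacle to be the bi-variable propagation of singularities argument in (iii), where a careful bookkeeping of the wave-front sets in both slots is required and the $\ii \epsilon$-prescription embedded in $\sigma_\epsilon$ must be consistently handled. Steps (i) and (ii) should largely mirror the classical analysis of Section \ref{sec:ClassicalIVP}, but the integrals in \eqref{Gxy} must be interpreted at the distributional level because the data $\,^\Psi G^+_{\varphi \varphi}$, $\,^\Psi G^+_{\varphi \pi}$, $\,^\Psi G^+_{\pi \varphi}$ and $\,^\Psi G^+_{\pi \pi}$ are singular bi-distributions on $\Sigma$; this is precisely what motivates the use of the regularised kernels with the $\epsilon$-prescription throughout.
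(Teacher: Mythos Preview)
Your proposal is correct and follows essentially the same approach as the paper. The paper presents the theorem with a \qed\ and no separate proof environment: the argument is the discussion immediately preceding the statement, which (a) invokes normal hyperbolicity of $\Box - m^2 - 2\lambda\vartheta$ for the existence of $E_\vartheta^\pm$ and the reconstruction formula \eqref{G+IVP}/\eqref{Gxy} via the symplectic identity \eqref{SyplSmear}, and (b) uses the Duistermaat--H\"ormander propagation of singularities theorem together with Radzikowski's microlocal spectrum condition to promote the Hadamard-compatibility conditions \eqref{G+dataHad} on $\Sigma$ to the Hadamard property everywhere. Your three-stage organisation (uniqueness, existence, Hadamard) is a more explicit restructuring of that same content.
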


Note that once the system consisting of eq. \eqref{SQ2-2pt1} and \eqref{SQ2-2pt2} is solved, it is possible to define $\langle \Psi | \hat \Phi^2 (\x)\Psi \rangle [\vartheta]$, by
\begin{equation}
	\langle \Psi | \hat \Phi^2 (\x)\Psi \rangle [\vartheta] = \lim_{\\x' \to \\x} \left(\,^\Psi G^+ [\vartheta] (\x,\x')- H_{\ell_0}^0[\vartheta](\x,\x') \right), \label{HadamardVac0}
\end{equation}
modulo renormalisation ambiguities, where $H_{\ell_0}^0 [\vartheta]$ is a particular Hadamard fundamental solution for which $w^{\ell_0} [\vartheta]$ has the form
\begin{equation}
	w^{\ell_0} [\vartheta] (\x,\x')= \sum_{n=0}^{\infty} \sigma^n (\x,\x') w_n^{\ell_0} [\vartheta](\x,\x'), \label{w0}
\end{equation}
with\footnote{This choice is made in order to match the value of $\,^{\Omega_M} G^+$ in the limit $\x'\rightarrow\x$.} $w^{\ell_0}_0= \frac{m^2}{2}\left[ \log \left(\frac{m^2 {\ell_0}^2}{2}\right) + 2 \gamma - 1 \right]$. This definition depends on the arbitrary renormalisation length-scale parameter $\ell_0$. It can be shown that expanding the Hadamard fundamental solution using a renormalisation length-scale $\ell$ instead of $\ell_0$ will transform \eqref{HadamardVac0} into
\begin{equation}
\langle \Psi | \hat \Phi^2 (\x)\Psi \rangle [\vartheta] = \alpha\left(m^2 + 2\lambda \vartheta - \frac{1}{6} R(\x)\right) + \lim_{\x' \to \x}\left( \,^\Psi G^+ [\vartheta] (\x,\x')- H_\ell^0[\vartheta](\x,\x')\right), 
\end{equation}
with $\alpha\in \mathbb{R}$ a constant that relates the renormalisation lengths $\ell$ and $\ell_0$ by
\begin{equation}
	\alpha = \frac{1}{4\pi^2} \ln \left(\frac{\ell_0}{\ell}\right).
\end{equation}

This renormalisation scale is a parameter that cannot be fixed on mathematical grounds. Quite the opposite: the most general characterisation of ambiguities mathematically admissible in the definition of regularised objects like $\langle \Psi | \hat \Phi^2 (\x)\Psi \rangle$ generally refered to as \textit{Wick polynomials}, has been given by Moretti and Khavkine \cite{Moretti:2014}, generalising the original work by Wald and Hollands \cite{Hollands:2001}. The main result is that ambiguities can be shown to be polynomial functions of the field parameters $(m^2,\vartheta)$, as well as scalars tensorially formed from $g^{ab}$, $R_{abcd}, \dots$ and their derivatives. We will express these ambiguities as $C[\vartheta, m^2, g^{ab}, R_{abcd}, \nabla_a R_{bcde},\dots]$, where $C$ scales as $L^2 C$ when it's arguments are rescaled as $\vartheta \mapsto L^2 \vartheta$, $m^2 \mapsto L^2 m^2$, $g^{ab}\mapsto L^2 g^{ab}$, $R_{abcd}\mapsto L^{-2} R_{abcd}$. It can be readily seen that in our case the terms that allow $C$ to satisfy this rescaling condition are of the form
\begin{equation}
	C(\x) = \beta_1 m^2 + \lambda \beta_2 \vartheta(\x) + \beta_3 R(\x),
\end{equation}
so that we have in general
\begin{align}
\langle \Psi | \hat \Phi^2 (\x)\Psi \rangle [\vartheta] &= \beta_1 m^2 + \lambda \beta_2 \vartheta(\x) + \beta_3 R(\x) \nonumber\\
&\quad + \lim_{\x' \to \x}\left( \,^\Psi G^+ [\vartheta] (\x,\x')- \breve H_\ell^0[\vartheta](\x,\x')\right), \label{HadamardVacPolAmb}
\end{align}
 These ambiguities should be determined by further physical requirements for the vacuum polarisation or directly from experiments in a realistic model. 
 
\subsection{Perturbative solutions for the decoupled systems}

One may be tempted to use theorem \ref{Th2} and equation \eqref{Greenpsi} to prove the full interacting system \eqref{ScalarQuant2pt} has a well posed initial value formulation in terms of causal propagators and the sources, by setting $\vartheta= \psi$ and $J= \lambda\langle \Psi | \hat \Phi^2 \Psi \rangle_{\text{ren}} $. However, an immediate obstruction appears: Given that $\psi$ now functionally depends also on $\,^\Psi G^+$, the system turns into a coupled nonlinear system for either $\,^\Psi G^+[\psi]$ or $\psi[\,^\Psi G^+]$, and thus, it is not possible to define Green operators for the coupled problem. If we consider, however, that $\lambda$ is a small coupling constant, the hope, that will be realised below in sec. \ref{subsec:Perturb} is that one obtains an order-by-order decoupling, turning the coupled system into a infinitely countable set of linear, normally hyperbolic equations, allowing us to recover the well posed initial value formulation in a perturbative sense.

If we get back to the decoupled system and expand the operator $E^{\pm}_{\vartheta}$ in the parameter $\lambda$, a perturbative series is obtained in terms of $E^\pm_0$, the free advanced and retarded Green functions, ie., the fundamental Green operators of the differential operator $P_0= \Box - m^2$. The resulting expansion is\footnote{See appendix \ref{sec:PertE} for details.}
\begin{equation}
E^\pm_{P} = E_0^\pm \left( \sum_{k = 0}^n (2 \lambda \vartheta E_0^\pm )^k \right) + O(\lambda^{n+1}).
\label{Eexp}
\end{equation}

Equation \eqref{Eexp} is written in a compact form using distributional notation, but notice that in its integral representation the order-$k$ term will contain $k$ integrals. We should mention that Dimock has studied in detail propagator expansions of the form of eq. \eqref{Eexp} in the context of a quantum scalar field coupled to an external gauge potential, including convergence of the series expansion, in flat spacetime. See \cite{Dimock:1979cm}, and in particular cf. eq. (2.5) therein. Inserting eq. \eqref{Eexp} into eq. \eqref{G+IVP}, or equivalently in eq. \eqref{Gxy}, one obtains a perturbative expansion of the Wightman function in the parameter $\lambda$  corresponding to the given initial data. 

One can similarly obtain a power series expansion of the quantum field, $\hat \Phi$, itself using eq. \eqref{SyplSmear} and \eqref{Eexp}, with $\hat \Phi$ satisfying commutation relations $[\hat \Phi (f), \hat \Phi(g)] = -\ii E_\psi (f,g)$ for $f,g \in C_0^\infty(\mathcal{M})$, while the zeroth order term in the expansion, the ``free field" $\hat \Phi_0$, is a quantum field on its own right, satisfying the wave equation $(\Box - m^2) \hat \Phi_0 = 0$ and the commutation relations $[\hat \Phi_0 (f), \hat \Phi_0(g)] = -\ii E_0(f,g)$, in addition to the linearity and hermiticity axioms.

\section{Perturbative solutions in $\lambda$ to the initial value problem of the coupled system}
\label{subsec:Perturb}

We now study the semiclassical interacting scalar model introduced in sec. \ref{sec:scalar}, consisting of the problem \eqref{ScalarQuant2pt} with initial data \eqref{ScalarQuant2ptData}. The main result is that this problem is well-posed in a perturbative sense in the coupling parameter. No claims on convergence of the asymptotic series solutions are made. For technical reasons, as mentioned in the introduction, we will set the interactions between the fields to be switched on and off by a smooth function of compact support, $\chi$, for a finite interval of the foliation time $T$. While $\chi$ vanishes, the evolution of the system is that of decoupled fields, and hence well-posed. The system becomes an interacting one whenever $\chi$ is switched on, and the purpose of this section is to solve the problem in the corresponding region of spacetime.

As we shall see below, if the coupling between the fields is weak, the switching function $\chi$ permits the construction of solutions order by order in  $\lambda$ for the Wightman function of the quantum field and for the classical field, in such a way that the two-point function is Hadamard order by order, in a sense that we make precise in def. \ref{def:PertHad} below.

The main result of this section is theorem \ref{thm:main}, where we show that the aforementioned perturbative construction of solutions for the coupled system is possible out of initial data for the free fields.

We then discuss in section \ref{subsub:weak} how to weaken the requirement that initial data is provided at a Cauchy surface that does not intersect the support of the switching function, by analyzing first why does initial data for free fields cannot be used  to define a Hadamard state when the hypersurface intersects the support of the switching function, and propose a procedure to construct valid initial data for the coupled system when interaction is always on.

\subsection{The main result}

Let us begin by introducing the following definition:

\begin{defn}
	\label{def:PertHad}
	Let $(\mathcal{M},g)$ be a globally hyperbolic spacetime, and let $P = P_0 + \alpha V$, $P_0 = \Box - m^2$, $m>0$, $V \in C^\infty(\mathcal{M})$ ($V \neq -m^2$) and $\alpha \in \mathbb{R}$ a perturbative parameter for the potential $V$, $|\alpha| \ll 1$. Let $\,^\Psi G^+_{P_0}$ be the bi-solution to $P_{0}\otimes 1 \, \,^\Psi G^+_{P_0} = 1 \otimes P_0 \, \,^\Psi G^+_{P_0} = 0$, such that $\,^\Psi G^+_{P_0}(\x, \x') = \overline{\,^\Psi G^+_{P_0} (\x', \x)}$, i.e., $\,^\Psi G^+_{P_0}$ is an abstract Wightman function distributional kernel, which moreover satisfies the Hadamard condition in the sense that if $H_{\ell,P_0}$ is a Hadamard fundamental solution for $P_0$, then $\,^\Psi G^+_{P_0} - H_{\ell, P_0} \in C^\infty(M)$. 
	
We say that $\,^\Psi G^+_P$ is an {\it order-$n$ Wightman function for the differential operator $P$ if for a fixed $n \in \mathbb{N}$, it admits a perturbative expansion
	\begin{equation}\label{ExpansionGP}
	\,^\Psi G^+_P = \,^\Psi G^+_{P_0} + \sum_{k = 1}^n \,^\Psi G^+_{P\, k} \alpha^k
	\end{equation}
	that satisfies 
	\begin{equation}
		(P\otimes 1) \, \,^\Psi G^+_P = (1 \otimes P) \, \,^\Psi G^+_P = O(\alpha^{n+1}),
	\end{equation}
	 and each of the bi-functions $\,^\Psi G^+_{P\, k}$ is symmetric and real-valued.}

We say that $H_{\ell, P}^n$ is a Hadamard fundamental solution of order $n$ for the differential operator $P$ if it has a short distance expansion of the form \eqref{HadamardF}, satisfies 
\begin{equation}
(P \otimes 1) H_{\ell, P}^n =  O(\alpha^{n+1}),
\end{equation}
and admits a perturbative expansion of the form
\begin{equation}
H_{\ell, P}^n = H_{\ell, P_0} + \sum_{k = 1}^n H_{\ell, k} \alpha^k.
\label{Hk}
\end{equation}
 We further say that $\,^\Psi G^+_P$ defines an {\it order-$n$ Hadamard state} $\Psi$ if there exists a Hadamard fundamental solution of order $n$ for the differential operator $P$, such that for every $k \in \{1, \dots, n \}$, $G_{P\, k} - H_{\ell, k} \in C^\infty(M)$.
\end{defn}

\begin{rem}
If $P_0$ is the massless Klein-Gordon operator, expansions \eqref{ExpansionGP} and \eqref{Hk} require additional logarithmic corrections in the perturbative parameter. In order to simplify our analysis, we restrict to the massive case, where such polynomial expansion holds. 
\end{rem}

The following lemma shows how to construct an order-$n$ Wightman function for the Klein-Gordon operator in a given perturbative parameter, defininig an order-$n$ Hadamard state, out of initial data satisfying the CCR.

\begin{lemma}
	Let $(\mathcal{M},g)$ be a globally hyperbolic spacetime with compact Cauchy hypersurfaces. Let the global time-function $T \in C^\infty(\mathcal{M})$ define a foliation by Cauchy hypersurfaces, and denote by $n$ the future directed unit normal to those. Let $\Sigma_i \subset \mathcal{M}$, $\Sigma_{\rm on} \subset \mathcal{M}$ and $\Sigma_f \subset \mathcal{M}$ be the Cauchy surface defined by $T(\x) = t_i$, $T(\x) = t_{\rm on}$ and $T(\x) = t_f$ respectively with $ t_i < t_{\rm on} < t_f$. Let $\Omega \subset \mathcal{M}$ be the compact set $\Omega = J^+(\Sigma) \cap J^-(\Sigma_f)$ and let $\chi \in C_0^\infty(\mathcal{M})$ have as support the closure of $J^+(\Sigma_{\rm on}) \cap J^-(\Sigma_f)$. Let $P_0$, $P$, $\,^\Psi G^+_{P_0}$ and $H_{\ell, P_0}$ be as in def. \ref{def:PertHad}, with $V = \chi U$, where $U \in C^\infty(M)$, hence $V \in C_0^\infty(M)$ with ${\rm supp}(V) = {\rm supp} (\chi)$. Suppose that in addition to the CCR on $\Sigma_i$, the initial data defining $G_{P_0}$ satisfies 
	\begin{subequations}
		\label{Lem+dataHad}
		\begin{align}
		\,^\Psi G^+_{\varphi \varphi} - H_{\ell, P_0}|_{\Sigma_i} \in C^\infty(\Sigma_i \times \Sigma_i), \\
		\,^\Psi G^+_{\pi \varphi} - (\nabla_n \otimes 1) H_{\ell, P_0}|_{\Sigma_i} \in C^\infty(\Sigma_i \times \Sigma_i), \\
		\,^\Psi G^+_{\varphi \pi} - (1 \otimes \nabla_n) H_{\ell, P_0}|_{\Sigma_i} \in C^\infty(\Sigma_i \times \Sigma_i),\\
		\,^\Psi G^+_{\pi \pi} - (\nabla_n \otimes \nabla_n) H_{\ell, P_0}|_{\Sigma_i} \in C^\infty(\Sigma_i \times \Sigma_i),
		\end{align}
	\end{subequations}
	in the sense that the $\epsilon$-regularised integral kernels defining the expressions on \ref{Lem+dataHad} by an ``integrate then take the limit" prescription are smooth as $\epsilon \to 0^+$ inside the integral. Then, one can construct 
	\begin{equation}
	\,^\Psi G^+_P = \,^\Psi G^+_{P_0} + \sum_{k = 1}^n \,^\Psi G^+_{P\, k} \alpha^k + O(\alpha^{n+1}),
	\label{LemGPsum}
	\end{equation}
	such that $(P \otimes 1) \,^\Psi G^+_P = (1 \otimes P) \,^\Psi G^+_P = O(\alpha^{n+1})$, which is an order-$n$ Wightman function for the differential operator $P$ and defines an order-$n$ Hadamard state in $\alpha$.
	\label{LemHadRec}
\end{lemma}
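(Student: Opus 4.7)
The plan is to combine Theorem~\ref{Th2} with the Neumann expansion \eqref{Eexp} of the causal propagators. The crucial observation enabling the construction is that the switching function \(\chi\) is supported to the future of \(\Sigma_{\rm on}\), so that \(V = \chi U\) vanishes on an open neighbourhood of \(\Sigma_i\); on this neighbourhood \(P = P_0\), and hence the Hadamard fundamental solutions \(H_{\ell,P_0}\) and \(H_{\ell,P}\) coincide in any convex normal neighbourhood contained therein. Consequently, the initial data conditions \eqref{Lem+dataHad} written in terms of \(H_{\ell,P_0}\) are simultaneously valid Hadamard data for the operator \(P\).

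I would first apply Theorem~\ref{Th2} with vanishing external potential to extend the initial data to the exact Hadamard Wightman function \(\,^\Psi G^+_{P_0}\), given by \eqref{Gxy} with \(E_\vartheta\) replaced by \(E_0\). Then I would apply Theorem~\ref{Th2} once more, now with the external potential identified (up to the trivial rescaling between the notation of Section~\ref{subsec:Decouple} and that of the present lemma) with the perturbation in \(P\), obtaining an exact Hadamard Wightman function \(\,^\Psi G^+_P\) for the interacting operator, represented by \eqref{Gxy} with \(E_\vartheta\) replaced by \(E_P\). The perturbative expansion \eqref{LemGPsum} is produced by substituting the truncated Neumann series \eqref{Eexp} for \(E_P^\pm\) into this integral representation: each coefficient \(\,^\Psi G^+_{P\, k}\) becomes an explicit convolution of \(V\), \(E_0^\pm\), and the initial data bi-distributions. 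The identity \(P\, E_P^\pm = 1 + O(\alpha^{n+1})\) inherited from the truncation propagates through the integral formula to yield both \((P \otimes 1)\,^\Psi G^+_P = O(\alpha^{n+1})\) and \((1 \otimes P)\,^\Psi G^+_P = O(\alpha^{n+1})\); the required symmetry and reality of the \(\,^\Psi G^+_{P\, k}\) then follow from the hermiticity conditions \eqref{DataHermit} on the data together with the reality and antisymmetry of the \(E_0^\pm\), the commutator contribution being already absorbed in the leading term \(\,^\Psi G^+_{P_0}\).

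To establish the order-\(n\) Hadamard property, I would next construct \(H_{\ell,P}^n\) by perturbatively solving the Hadamard recursion relations of Definition~\ref{def:Hadamard} for the operator \(P\). Since the recursion coefficients depend polynomially on the potential \(V\), one obtains an expansion of the form \eqref{Hk}, with each \(H_{\ell,k}\) (for \(k\ge 1\)) consisting of a \(\log\sigma\) singular piece and a smooth remainder. Because \(\,^\Psi G^+_P\) is globally Hadamard for \(P\) by Theorem~\ref{Th2}, the difference \(\,^\Psi G^+_P - H_{\ell,P}\) is a smooth bi-function depending smoothly on \(\alpha\); matching powers of \(\alpha\) then yields the smoothness of \(\,^\Psi G^+_{P\, k} - H_{\ell,k}\) for every \(k\), which is exactly the order-\(n\) Hadamard condition.

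The main technical obstacle I anticipate lies in controlling the joint smoothness of the Hadamard subtraction in \((\x, \x', \alpha)\) and in verifying that the \(O(\alpha^{n+1})\) remainder inherited from truncating the Neumann series carries a wave-front set compatible with the microlocal spectrum condition, so that the propagation of singularities theorem can be applied uniformly in the perturbative parameter. The switching of \(\chi\) is what makes this tractable: for each \(k \geq 1\), both \(\,^\Psi G^+_{P\, k}\) and \(H_{\ell,k}\) vanish in a neighbourhood of \(\Sigma_i\), which supplies the trivial base case for an induction on \(k\) in which the order-\(k\) difference satisfies a free Klein-Gordon equation sourced by the (smooth, by the inductive hypothesis) order-\((k-1)\) difference multiplied by \(V\), and Leray's theorem propagates this smoothness forward along the foliation.
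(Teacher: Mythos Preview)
Your proposal is correct and follows essentially the same route as the paper: expand the fundamental Green operators via the Neumann series \eqref{Eexp}, insert into the integral representation \eqref{Gxy} to obtain the order-by-order coefficients $\,^\Psi G^+_{P\,k}$, and then match these against the $\alpha$-expansion of the Hadamard fundamental solution to conclude the order-$n$ Hadamard property. Your write-up is in fact more explicit than the paper's on two points---the role of the switching function in making the $P_0$-Hadamard initial data simultaneously valid for $P$, and the inductive smoothing argument via Leray's theorem---both of which the paper leaves implicit.
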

\begin{proof}
	The fundamental Green operators for $P$, $E^\pm$, can be approximated up to order $\lambda^n$ as in eq. \ref{Eexp} with the aid of the free fundamental Green operators, $E^\pm_0$, as
	\begin{equation}
	E^\pm = E^\pm_0 \left(\sum_{k = 0}^n (-1)^k (\alpha V E^\pm_0)^k \right) + O(\alpha^{n+1}).
	\label{LemEpmsum}
	\end{equation}
	
	Eq. \eqref{LemEpmsum}, in turn yields the expansion for $\,^\Psi G^+_P$ in eq. \eqref{LemGPsum}, with the aid of eq. \eqref{Gxy} by inserting the $\alpha$-expansion for $E = E^- - E^+$, with each term in the expansion on the right-hand side of eq. \eqref{LemGPsum} obtained in terms of the initial data satisfying \eqref{Lem+dataHad}. In a geodesically convex neighbourhood, the short-distance singularities of $\,^\Psi G^+_P$ are the same that those of the Hadamard fundamental solution $H_{\ell, P}^0$. Expanding $H_{\ell, P}^0$ order by order in $\alpha$, as in eq. \eqref{Hk}, the singular structure of $\,^\Psi G^+_P$ is handled order by order, such that for each order-$k$ in $\alpha$ in the expansion, with $k \in \{1, \ldots n\}$, satisfies $\,^\Psi G^+_{P_k} - H_{\ell, k} \in C^\infty(\mathcal{M})$, and hence the expansion given by eq. \eqref{LemGPsum} is an order-$n$ term Wightman function for the differential operator $P$ and defines an order-$n$ Hadamard state.
\end{proof}

We now solve the interacting scalars system. 

\begin{thm}
	Let $(\mathcal{M},g)$ be a globally hyperbolic spacetime with compact Cauchy hypersurfaces and let $R$ be the Ricci scalar curvature. Let $T\in C^\infty(\mathcal{M})$ be a time function defining the Cauchy surfaces $\Sigma$ (with future directed unit normal $n$), $\Sigma_{\rm on}$ and $\Sigma_f$ as in lemma \ref{LemHadRec}, with $\Sigma_i \in J^-(\Sigma_{\rm on})$ and $\Sigma_f \in J^+(\Sigma_{\rm on})$. Let $\Omega \subset M$ be, as in lemma \ref{LemHadRec}, the compact set $\Omega = J^+(\Sigma) \cap J^-(\Sigma_f)$. Let $\varsigma, \, \varpi \in C_0^\infty(\Sigma)$, and $\,^\Psi G^+_{\varphi \varphi}, \,^\Psi G^+_{\pi \pi} \, : \, C_0^\infty(\Sigma) \times C_0^\infty(\Sigma) \rightarrow \mathbb{R}$ and $\,^\Psi G^+_{\pi \varphi}, \,^\Psi G^+_{\varphi \pi} \, : \, C_0^\infty(\Sigma) \times C_0^\infty(\Sigma) \rightarrow \mathbb{C}$ initial data for the Wightman two-point function of a free Klein-Gordon field in a Hadamard state, ie., satisfying \textsc{CCR},
	\begin{enumerate}[\hspace{0.3 \linewidth}(i)]
		\item $\,^\Psi G^+_{\varphi \pi}(\underline{\x}, \underline{\y}) - \,^\Psi G^+_{\pi \varphi}(\underline{\y}, \underline{\x}) = \ii \delta(\underline{\x}, \underline{\y})/\det (h_{\alpha\beta}(\underline{x}))$,
		\item $\,^\Psi G^+_{\varphi \varphi}(\underline{\x}, \underline{\y}) - \,^\Psi G^+_{\varphi \varphi}(\underline{\y}, \underline{\x}) = 0$,
		\item $\,^\Psi G^+_{\pi \pi}(\underline{\x}, \underline{\y}) - \,^\Psi G^+_{\pi \pi}(\underline{\y}, \underline{\x}) = 0$,
	\end{enumerate}	
hermiticity,
\begin{equation}
\,^\Psi G^+_{\varphi \pi}(\underline{\x}, \underline{\y}) = \overline{\,^\Psi G^+_{\pi \varphi}(\underline{\y}, \underline{\x})},
\end{equation}
and 
	\begin{subequations}
		\label{ThmG+dataHad}
		\begin{align}
		\,^\Psi G^+_{\varphi \varphi} - H_{\ell}^0|_{\Sigma_i} \in C^\infty({\Sigma_i} \times {\Sigma_i}), \\
		\,^\Psi G^+_{\pi \varphi} - (\nabla_n \otimes 1) H_{\ell}^0|_{\Sigma_i} \in C^\infty({\Sigma_i} \times {\Sigma_i}), \\
		\,^\Psi G^+_{\varphi \pi} - (1 \otimes \nabla_n) H_{\ell}^0|_{\Sigma_i} \in C^\infty({\Sigma_i} \times {\Sigma_i}),\\
		\,^\Psi G^+_{\pi \pi} - (\nabla_n \otimes \nabla_n) H_{\ell}^0|_{\Sigma_i} \in C^\infty({\Sigma_i} \times {\Sigma_i}),
		\end{align}
	\end{subequations}
	where $H_{\ell}^0$ is the Hadamard fundamental solution for 
	\begin{equation}
	(\Box_\x - m^2 ) H_{\ell}^0 (\x,\y)= 0.
	\end{equation}
	with $w_0=\frac{m^2}{2}\left[ \log \left(\frac{m^2 {\ell}^2}{2}\right) + 2 \gamma - 1 \right]$ (cf. eq. (\ref{w0})). Let $\beta_1, \beta_2, \beta_3 \in \mathbb{R}$ be arbitrary parameters, $\lambda \in \mathbb{R}$, $\chi \in C_0^\infty(\mathcal{M})$ with its support equal to the closure of $J^+(\Sigma_{\rm on}) \cap J^-(\Sigma_{f})$ and $\Lambda = \lambda \chi$. 
	
	The problem defined on $\Omega$ by
	\begin{subequations}
		\label{ThmScalarQuant2pt}
		\begin{align}
		(\Box_\x - M^2)\psi(\x) = \Lambda\langle \Psi | \hat \Phi^2(\x) \Psi \rangle, \label{ThmSQ1-2pt} \\
		(\Box_\x - m^2 - 2\Lambda\psi(\x))\,^\Psi G^+(\x,\x')=0, \label{ThmSQ2-2pt1}\\
		(\Box_{\x'} - m^2 - 2\Lambda\psi(\x'))\,^\Psi G^+(\x,\x')=0, \label{ThmSQ2-2pt2}
		\end{align}
	\end{subequations}
where the left hand side of eq. \eqref{ThmSQ1-2pt} is defined as
\begin{equation}
 \langle \Psi | \hat \Phi^2(\x) \Psi \rangle = \lim_{\x' \to \x} \left( \,^\Psi G^+(\x, \x') - H_\ell (\x, \x')\right) + \beta_1 m^2 + \Lambda \beta_2 \psi(\x) + \beta_3 R(\x) , \label{ThmVacPol}
\end{equation}
with $H_{\ell}$ the Hadamard fundamental solution for 
\begin{equation}
(\Box_\x - m^2 -2\Lambda\psi(\x)) H_{\ell} (\x,\y)= 0,
\end{equation} 
	and subject to initial data 
	\begin{subequations}
		\label{ThmScalarQuant2ptData}
		\begin{align}
		& \left\{
		\begin{array}{rl}
		\psi|_{\Sigma_i} = & \!\!\! \varsigma, \\
		\nabla_n \psi|_{\Sigma_i} = & \!\!\! \varpi,
		\label{Thmpsidata}
		\end{array}
		\right. \\
		%&\text{together with} \\
		& \left\{
		\begin{array}{rl}
		\,^\Psi G^+|_{\Sigma_i} = \,^\Psi G^+_{\varphi \varphi} ,\\
		(\nabla_n \otimes 1)\,^\Psi G^+|_{\Sigma_i} = \,^\Psi G^+_{\pi \varphi},\\
		(1 \otimes \nabla_n)\,^\Psi G^+|_{\Sigma_i} = \,^\Psi G^+_{\varphi \pi},\\
		(\nabla_n \otimes \nabla_n)\,^\Psi G^+|_{\Sigma_i} = \,^\Psi G^+_{\pi \pi},
		\label{ThmG+data}
		\end{array}
		\right.
		\end{align}
	\end{subequations}
admits a unique perturbative, asymptotic solution in the parameter $\lambda$ (satisfying $|\lambda| \ll 1$) up to a fixed order-$n$, $n \in \mathbb{N}$, of the form
	\begin{subequations}\label{LambdaSeries}
		\begin{align}
		\psi &= \sum_{k = 0}^n \psi_k \lambda^k + O(\lambda^{n+1}),\label{PsiLambda0}\\
		\,^\Psi G^+ &= \sum_{k= 0}^n \,^\Psi G^+_k \lambda^k +O(\lambda^{n+1}),\label{OmegaLambda}
		\end{align}
	\end{subequations}
i.e., eq. \eqref{LambdaSeries} solve the system \eqref{ThmScalarQuant2pt} up to order-$n$ so that $\,^\Psi G^+$ is an order-$n$ Wightman function for $\Box - m^2 - 2\Lambda\psi$ corresponding to an order-$n$ Hadamard state in the sense of def. \ref{def:PertHad}. 
	\label{thm:main}
\end{thm}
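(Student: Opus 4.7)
My approach is to proceed by strong induction on the perturbative order $k$, inserting the ansatz (\ref{LambdaSeries}) into the coupled system (\ref{ThmScalarQuant2pt}) and matching coefficients of $\lambda^k$. The crucial observation is that since $\Lambda=\lambda\chi$ and $\chi$ vanishes identically on $\Sigma_i$, every term at order $k\geq 1$ in the equations of motion carries at least one factor of $\chi$ that is zero on the initial-data surface. Hence the full Cauchy data (\ref{ThmScalarQuant2ptData}) is absorbed entirely at order zero, and the hierarchy for $(\psi_k,\,^\Psi G^+_k)$ with $k\geq 1$ reduces to a tower of inhomogeneous \emph{linear} Cauchy problems with vanishing data on $\Sigma_i$ posed on the compact set $\Omega$.

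At order zero, $(\Box-M^2)\psi_0=0$ with data $(\varsigma,\varpi)$ is uniquely solved by Lemma \ref{LemmaPsi} with $J\equiv 0$, while the free bi-equation $(\Box_{\x}-m^2)\,^\Psi G^+_0=(\Box_{\x'}-m^2)\,^\Psi G^+_0=0$ with the prescribed bi-distributional data admits the unique Wightman bi-solution furnished by Theorem \ref{Th2} (with $\vartheta=0$); the regularity assumptions (\ref{ThmG+dataHad}) guarantee that $\,^\Psi G^+_0$ has Hadamard form. For the inductive step, assume $\psi_j$ and $\,^\Psi G^+_j$ have been produced for $0\leq j<k$. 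I would then substitute the partial sum $\psi^{(k-1)}:=\sum_{j=0}^{k-1}\psi_j\lambda^j$ as a smooth external potential into $\Box-m^2-2\Lambda\psi$, apply the propagator expansion (\ref{Eexp}) to its advanced and retarded fundamental solutions $E^\pm$, and insert the result into the integral representation (\ref{Gxy}). Collecting the $\lambda^k$ coefficient and invoking Lemma \ref{LemHadRec} produces $\,^\Psi G^+_k$ uniquely as a multilinear functional of the free Cauchy data and of $\psi_0,\ldots,\psi_{k-1}$, and certifies that $\,^\Psi G^+_k-H_{\ell,k}$ is smooth on $\Omega$, where $H_{\ell,k}$ is the $k$-th coefficient of the expansion (\ref{Hk}) of the interacting Hadamard parametrix.

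With $\,^\Psi G^+$ now known through order $k-1$, I would extract the order-$(k-1)$ coefficient of the renormalized source by subtracting, term by term in $\lambda$, the perturbative parametrix inside (\ref{ThmVacPol}); the polynomial ambiguity $\beta_1 m^2+\Lambda\beta_2\psi+\beta_3 R$ contributes through its $\Lambda\beta_2\psi$ piece at orders $\geq 1$ only. The smooth source $J_k\in C^\infty(\Omega)$ thus produced is supported in the causal future of $\Sigma_{\rm on}$, so it vanishes in a neighbourhood of $\Sigma_i$. Lemma \ref{LemmaPsi} applied to $(\Box-M^2)\psi_k=J_k$ with vanishing Cauchy data on $\Sigma_i$ then yields the unique determination $\psi_k=E_M^+ J_k$, which is smooth on $\Omega$ because $\chi$ has compact support. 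Uniqueness of the full asymptotic series follows because each sub-problem in the induction is linear with a uniquely determined solution, and smoothness of the coincidence limit (\ref{ThmVacPol}) at each order encodes the order-$n$ Hadamard property in the sense of def.\ \ref{def:PertHad}.

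The principal technical obstacle is verifying that the iterative construction preserves the Hadamard structure at every finite order, namely that the expansion (\ref{Hk}) of the parametrix for the $\psi$-dependent operator is consistent with the $\lambda$-expansion (\ref{Eexp}) of $E^\pm$. The van Vleck--Morette determinant and the Hadamard coefficients $v,w^\ell$ depend on $\Lambda\psi$ through the recursion relations derived from (\ref{eqPH}), so one must show that expanding these coefficients using $\psi=\sum_j\psi_j\lambda^j$ reproduces precisely the $H_{\ell,k}$ required by Lemma \ref{LemHadRec}, and that the $O(\lambda^{n+1})$ remainder remains controlled on the compact set $\Omega$. Once this matching of the two expansions is established, the smoothness of (\ref{ThmVacPol}) at each order closes the induction and delivers the claimed order-$n$ Wightman function corresponding to an order-$n$ Hadamard state.
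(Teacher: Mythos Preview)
Your proposal is correct and follows essentially the same route as the paper's own proof: both insert the ans\"atze \eqref{LambdaSeries} into \eqref{ThmScalarQuant2pt}, obtain a decoupled hierarchy in which $\psi_0$ carries the full classical data and $\psi_k$ for $k\geq 1$ has vanishing data, build $\,^\Psi G^+_k$ from the free Cauchy data via the expansion \eqref{Eexp} of $E^\pm$ inserted into \eqref{Gxy}, and invoke Lemma~\ref{LemHadRec} for the order-$n$ Hadamard property. Your organisation by strong induction is merely a repackaging of the paper's sequential construction; the dependency structure you identify ($\,^\Psi G^+_k$ depends on $\psi_0,\ldots,\psi_{k-1}$, while $\psi_k$ depends on $\,^\Psi G^+_{k-1}$ and $\psi_{k-2}$) is exactly what the paper records in eqs.~\eqref{Erecursion}, \eqref{psiLambdaRec} and \eqref{GxyExp}, and the ``principal technical obstacle'' you flag --- that the $\lambda$-expansion of the Hadamard coefficients $v,w^\ell$ must match the hierarchy --- is precisely what the paper defers to Appendix~\ref{sec:HadamardPert}. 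One small sharpening: your observation that $J_k$ is supported in $J^+(\Sigma_{\rm on})$ (thanks to the overall $\chi$ in $\Lambda$) and hence $\psi_k=E_M^+J_k$ directly is slightly cleaner than the paper's use of the general formula \eqref{SolClassSource}, whose boundary terms then vanish for the same reason.
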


\begin{proof}
	The proof is constructive and consists on the sequential calculation of the coefficients $\psi_k$ and $\,^\Psi G^+_k$, $k \in \{ 0, \ldots, n\}$, in eq. \eqref{LambdaSeries}, while ensuring that $\,^\Psi G^+$ is an order-$n$ Wightman function for $\Box - m^2 - 2\Lambda\psi$ defining an order-$n$ Hadamard state in the sense of def. \ref{def:PertHad}.
	
	The proof relies on a perturbative construction of the advanced and retarded fundamental Green operators, $E^\pm$, for the operator $\Box - m^2 - 2 \Lambda \psi$. Let 
	\begin{subequations}
		\begin{align}
		\psi &= \sum_{k = 0}^n \psi_k \lambda^k + O(\lambda^{n+1}),\label{PsiLambda}\\
		E^\pm &= \sum_{k= 0}^n E^\pm_k \lambda^k +O(\lambda^{n+1}).\label{EpmLambda}
		\end{align}
	\end{subequations}
	Inserting the above expansions into $(\Box_\x - m^2 - 2 \Lambda \psi(\x)) E^\pm(\x,\x') = \delta_g(\x, \x')$, one can see that the coefficients on the right-hand side of eq. \eqref{EpmLambda} obey the recursion relations
	\begin{subequations}
		\label{Erecursion}
		\begin{align}
		&(\Box_\x - m^2) E^\pm_0 (\x, \x') = \delta_g (\x, \x'), \\
		&(\Box_\x - m^2) E^\pm_k(\x,\x') = 2 \sum_{j = 0}^{k-1} \psi_{j}(\x) \chi(\x) E_{k-j-1}^\pm(\x, \x'), \hspace{1cm} k \in \{1, \ldots, n\}.
		\end{align}
	\end{subequations}
	
	Thus, $E^\pm_0$ are the decoupled fundamental Green operators, and 
	\begin{subequations}
		\begin{align}
		E^\pm & = E^\pm_0 + \sum_{k = 1}^n E^\pm_k \lambda^k + O(\lambda^{n+1}) \\
		E^\pm_k & = 2 \sum_{k= 1}^n \lambda^k E^\pm_0\left( \sum_{j = 0}^{k-1} \psi_{j} \chi E_{k-j-1}^\pm \right), \hspace{1cm} k \in \{1, \ldots, n\}.
		\label{EpmSolution}
		\end{align}
	\end{subequations}
	
	The order-$k$ contribution, $E^\pm_k$, of the expansion of $E^\pm$, in eq. \eqref{EpmSolution} is given in terms of all the lower order $E^\pm_j$ and $\psi_j$ for $j \in \{0, \ldots, k-1 \}$. In order to obtain the $\psi_k$ coefficients of eq. \eqref{PsiLambda}, we write the right-hand side of eq. \eqref{ThmSQ1-2pt} as a power series in $\lambda$. Expanding eq. \eqref{ThmVacPol} we have
	\begin{align}
	\langle \Psi | \hat \Phi^2(\x) \Psi \rangle & = \lim_{\x' \to \x} \left[\,^\Psi G^+_0 (\x, \x') - H_{\ell,0} (\x, \x')\right] + \beta_1 m^2 + \beta_3 R(\x) \nonumber \\
	&\quad + \sum_{k = 1}^n \left( \lim_{\x' \to \x} \left[\,^\Psi G^+_k(\x, \x') - H_{\ell,k} (\x, \x')\right] + \beta_2 \chi \psi_{k-1}(\x)\right)\lambda^k+ O(\lambda^{n+1}).
	\label{VacPolk}
	\end{align}
	Let us define $\mathfrak{W}_{\ell,k} (\x) = \lim_{\x' \to \x} \left[\,^\Psi G^+_k(\x, \x') - H_{\ell,k} (\x, \x')\right]$. Substituting expansion \eqref{VacPolk} on \eqref{ThmSQ1-2pt} yields the following recursive equations for the $\psi_k$ coefficients
	\begin{subequations}
		\label{psiLambdaRec}
		\begin{align}
		&(\Box_\x - M^2) \psi_0(\x) = 0, \label{psi0} \\
		&(\Box_\x - M^2) \psi_1(\x) = \mathfrak{W}_{\ell,0} (\x) + \beta_1 m^2 + \beta_3 R(\x), \label{psi1} \\
		&(\Box_\x - M^2) \psi_k(\x) = \mathfrak{W}_{\ell,k-1} (\x) + \beta_2 \chi (\x) \psi_{k-2}(\x), \hspace{1cm} k \in \{2, \ldots, n\}.\label{psik}
		\end{align}
	\end{subequations}
	
	The recursion relations given by eq. \eqref{psiLambdaRec} are such that every coefficient $\psi_k$ depends on the lower order coefficients $\,^\Psi G^+_{k-1}$, $H_{\ell,k-1}$ and $\psi_{k-2}$. The sources on the right-hand side of eq. \eqref{psi1}, \eqref{psik} should in general contain distributional singularities, cf. eq. \eqref{VacPolk}, but whenever $\,^\Psi G^+$ can be realised as an order-$n$ Hadamard state in the sense of def. \ref{def:PertHad}, these sources will be smooth. We shall see below that this is the case with the aid of lemma \ref{LemHadRec}. Eq. \eqref{psiLambdaRec}, subject to the inital data for each $\psi_k$,
	\begin{subequations}
	\label{fks}
	\begin{align}
	& \left\{
	\begin{array}{rl}
	\psi_0|_\Sigma & = \varsigma, \\
	(\nabla_n \psi_0)|_\Sigma & = \varpi,
	\end{array}
	\right. \\
	& \left\{
	\begin{array}{rll}
	\psi_k|_\Sigma & = 0, & k \in \{1, \ldots, n\},\\
	(\nabla_n \psi_k)|_\Sigma & = 0, & k \in \{1, \ldots, n\},
	\end{array}
	\right. 
	\end{align}
\end{subequations}
	 define then a set of initial value problems with a \textit{fixed} source, which can be solved by standard Green-function methods just as in section \ref{sec:ClassicalIVP}. Let $E^\pm_M$ be the fundamental Green operators for the operator $(\Box_\x - M^2)$, and $E_M = E^-_M - E^+_M$. Then, application of eq. \eqref{SolClassSource} for each $\psi_k$, yields
	 	\begin{subequations} \label{psikSol}
	\begin{align}
	\psi_0 (\x) = \int_{\Sigma} \, \dd\Sigma ({\underline{\y}}) \, \Big[ E_M(\x,{\underline{\y}}) \varpi({\underline{\y}}) -  \varsigma ({\underline{\y}}) (\nabla_n)_{{\underline{\y}}} E_M(\x,{\underline{\y}})\Big] , \label{SolPsi0}
	\end{align}
	for $k=0$ and 
	\begin{align}
	\psi_k (\x) = \int_{\Sigma} \, \dd\Sigma ({\underline{\y}}) \, \Big[  \mathcal{J}_k ({\underline{\y}})(\nabla_n)_{{\underline{\y}}} E_M(\x,{\underline{\y}}) -E_M(\x,{\underline{\y}}) \Pi_k({\underline{\y}})  \Big] +  E_M^+ J_k, \label{SolPsik}
	\end{align}
\end{subequations}
for $k \in \{1, \ldots, n\}$, where 
\begin{subequations}
\begin{align}
	J_1 &= \mathfrak{W}_{\ell,0} (\x) + \beta_1 m^2 + \beta_3 R(\x),\\
	J_k &= \mathfrak{W}_{\ell,k-1} (\x) + \beta_2 \chi (\x) \psi_{k-2}(\x), \hspace{1cm} k \in \{2, \ldots, n\},
\end{align}
\end{subequations}
and 
\begin{subequations}
	\begin{align}
		\mathcal{J}_k &=( E_M^+ J_k ) |_{\Sigma},\\
		\Pi_k &= (\nabla_n (E_M^+ J_k))|_{\Sigma}.
	\end{align}
\end{subequations}
	We now show that $\,^\Psi G^+$ is an order-$n$ Wightman function for $\Box - m^2 - 2\Lambda\psi$ defining an order-$n$ Hadamard state in the sense of def. \ref{def:PertHad}. The two-point function coefficients, $\,^\Psi G^+_k$, are obtained in terms of the initial data with the aid of the advanced-minus-retarded propagator, $E = E^- - E^+$, with $E^\pm$ given by eq. \eqref{EpmSolution} and using formula \eqref{Gxy}. One has that
	\begin{align}
	\,^\Psi G^+_\epsilon (\x, \y)& = \sum_{i=0}^{n} (\,^\Psi G^+_\epsilon)_i (\x, \y) \lambda^i + O(\lambda^{n+1}) \nonumber \\ 
	& = \sum_{i=0}^{n} \lambda^i \sum_{j=0}^i \lim_{\epsilon \to 0^+} \left[ \int_{\Sigma \times \Sigma} \dd \Sigma(\underline{\x}') \dd \Sigma (\underline{\y}') \,^\Psi G^+_{\varphi\varphi \,\epsilon}(\underline{\x}',\underline{\y}') E_j(\underline{\x}',\x)E_{i-j}(\underline{\y}',\y) \nonumber \right. \\
	& - \int_{\Sigma \times \Sigma} \dd \Sigma(\underline{\x}') \dd \Sigma (\underline{\y}') \,^\Psi G^+_{\varphi\pi \,\epsilon}(\underline{\x}',\underline{\y}') \left( (\nabla_{n})_{\underline{\x}'} E_j(\underline{\x}',\x)\right) E_{i-j}(\underline{\y}',\y) \nonumber \\
	& - \int_{\Sigma \times \Sigma} \dd \Sigma(\underline{\x}') \dd \Sigma (\underline{\y}') \,^\Psi G^+_{\pi\varphi \,\epsilon}(\underline{\x}',\underline{\y}') E_j(\underline{\x}',\x) \left( (\nabla_{n})_{\underline{\y}'} E_{i-j}(\underline{\y}',\y)\right) \nonumber \\
	& \left. + \int_{\Sigma \times \Sigma} \dd \Sigma(\underline{\x}') \dd \Sigma (\underline{\y}') \,^\Psi G^+_{\varphi\varphi \,\epsilon}(\underline{\x}',\underline{\y}') \left( (\nabla_{n})_{\underline{\x}'} E_j(\underline{\x}',\x)\right) \left((\nabla_{n})_{\underline{\y}'} E_{i-j}(\underline{\y}',\y)\right) \right] \nonumber \\
	& + O(\lambda^{n+1}).
	\label{GxyExp}
	\end{align}
	
	Eq. \eqref{GxyExp} reveals that the coefficient $\,^\Psi G^+_k$ of the expansion of $\,^\Psi G^+$ depends on the $E^\pm$ coefficients of order $E^\pm_i$, $i \in \{0, \ldots, k\}$, and hence only on the coefficients $\psi_j$, $ j \in \{0, \ldots, k-1\}$. Similarly, it can be shown\footnote{See appendix \ref{sec:HadamardPert}} that the coefficients $H_{\ell,k}$ contain a dependence on the functions $\psi_j$ with $j \in \{0, \ldots, k-1\}$, completing the necessary condition for a sequential resolution of the system \eqref{ThmScalarQuant2pt} subject to initial data \eqref{ThmScalarQuant2ptData}. Application of Lemma \ref{LemHadRec} ensures that $\,^\Psi G^+$ is an order-$n$ Wightman function for $\Box - m^2 - 2\Lambda\psi$ defining an order-$n$ Hadamard state in the sense of def. \ref{def:PertHad}. 
\end{proof}

Theorem \ref{thm:main}, implies that, indeed, eq. \eqref{ThmVacPol} can be interpreted as the expectation value of the $\Phi^2$ operator of a perturbatively-constructed quantum field interacting with the classical field $\psi$,
\begin{align}
\hat \Phi (f) & = \sum_{k = 0}^n \lambda^k \hat \Phi_k(f) + O(\lambda^{n+1}) \nonumber \\
& = \sum_{k = 0}^n \lambda^k \left[ \hat \varphi(\nabla_n E_k f |_\Sigma) - \hat \pi(E_k f |_\Sigma)\right] + O(\lambda^{n+1}). \label{hatPhi}
\end{align}
that satisfies the {\rm (i)} linearity and {\rm (ii)} hermiticity Klein-Gordon axioms, and perturbative versions of {\rm (iii)} the commutation relations up to order-$n$, $[\hat \Phi (f), \hat \Phi (g)] = - \ii E(f,g) + O(\lambda^{n+1})$, which imply the relations
\begin{equation}
\sum_{j = 0}^k [ \hat \Phi_j(f), \hat \Phi_{k-j}(g)] = E_k(f,g), \quad k \in \{0, \ldots, n\},
\end{equation}
of which the $k = 0$ relation are the decoupled commutation relations, and {\rm (iv)} the Klein-Gordon equation axiom up to order-$n$,
\begin{equation}
(\Box - m^2 - 2 \lambda \chi \psi)\hat \Phi = O(\lambda^{n+1}),
\end{equation}
which imply the relations (cf. eq. \eqref{Erecursion})
\begin{subequations}
	\label{Phirec}
	\begin{align}
	&(\Box - m^2) \hat \Phi_0 = 0, \label{PhiDeco} \\
	&(\Box - m^2) \hat \Phi_k = 2 \sum_{j = 0}^{k-1} \psi_{j} \chi \hat \Phi_{k-j-1}, \hspace{1cm} k \in \{1, \ldots, n\},
	\end{align}
\end{subequations}
of which eq. \eqref{PhiDeco} is the decoupled field equation.

\subsection{Weakening of hypotheses and obstructions to constructing a Hadamard state}
\label{subsub:weak} 

%Suppose that we have the initial value problem of theorem \ref{thm:main}, but such that the initial value Cauchy surface, $\Sigma$, intersects the switch-on Cauchy surface of the test function $\chi$, $\Sigma_{\rm on}$. 
Suppose that we prescind of the switching function $\chi$, or equivalently, take $\xi=1$ for all $\x \in \mathcal{M}$. Then, initial data for a Wightman function of the decoupled problem that is of the Hadamard form, like in eq. \eqref{Lem+dataHad}, will not in general correspond to initial data for the coupled problem having the Hadamard form.

This can be seen by realising that the short-distance distributional singularities of $\,^\Psi G^+$ on the initial value Cauchy surface $\Sigma$ should match those of the complete Hadamard fundamental solution on said hypersurface, but the data on $\Sigma$ for the Hadamard parametrices of the decoupled and the coupled problems differ by non-regular terms. More precisely, considering the differential operators 
\begin{subequations}
	\begin{align}
		& P_{\mu} = \Box- \mu, \\
		& P_{m} = \Box - m^2, 
	\end{align}
\end{subequations}
where $\mu^2=m^2 + \lambda\psi \in C^\infty(\mathcal{M})$ can be thought as a varying mas. Consider the Hadamard fundamental solutions for these operators, satisfying
\begin{subequations}
	\begin{align}
	& P_{m,\x} H_\ell (\x,\x')= 0, \\
	& P_{\mu,\x} H_{\ell}^0 (\x,\x') = 0.
	\end{align}
\end{subequations}
In a geodesically convex neighbourhood $N \subset \Omega$, we have that 
\begin{equation}
H_{\ell} - H_{\ell}^0 = (1/(8\pi^2)) (v_\mu - v_{m}) \ln (\sigma_\epsilon/\ell^2) + S,
\label{Hsubtraction}
\end{equation}
where $v_\mu$, $v_m$ and $S \in C^\infty(N \times N)$, and $v_\mu$ and $v_{m}$ are computed from their covariant Taylor expansion, subject to Hadamard recursion relations. More precisely, we have that $v_\mu = \sum_{k= 0}^\infty v_{\mu,k} \sigma^k$ with
\begin{subequations}
	\begin{align}
	0 & = 2 v_{\mu,0} + 2 v_{\mu,0 \,;a} \sigma^{;a} - 2 v_{\mu,0} \Delta^{-1/2} (\Delta^{1/2})_{;a} \sigma^{;a} + P_\mu \Delta^{1/2}, \\
	0 & = 2(k+1)(k+2) v_{\mu, k+1} + 2(k+1) v_{\mu, k+1 \, ;a} \sigma^{;a} - 2(k+1)v_{\mu, k+1} \Delta^{-1/2} (\Delta^{1/2})_{;a} \sigma^{;a} \nonumber \\
	& \quad + P_\mu  v_{\mu,k}, \quad k \in \mathbb{N}_0,
	\end{align}
\end{subequations}
where semicolons indicate covariant differentiation, and similarly for $v_{m} = \sum_{k = 0}^\infty v_{m,k} \sigma^k$.

On the initial value Cauchy surface, $\Sigma$, we have that
\begin{align}
H_{\ell}^\mu|_\Sigma - H_{\ell}^m|_\Sigma & = \frac{1}{8\pi^2} (v_{\mu,0} - v_{m,0})|_\Sigma \ln (\sigma_\epsilon|_{\Sigma}/\ell^2) + O\left(\sigma^{1/2}_\epsilon|_{\Sigma} \ln (\sigma_\epsilon|_{\Sigma}/\ell^2)\right) + S|_\Sigma \nonumber \\
& = -\frac{1}{4\pi^2} \lambda   \, \varsigma \, \ln (\sigma_\epsilon|_{\Sigma}/\ell^2) + O\left(\sigma^{1/2}_\epsilon|_{\Sigma} \ln (\sigma_\epsilon|_{\Sigma}/\ell^2)\right) + S|_\Sigma,
\end{align}
which for arbitrary $\varsigma$ is singular in the limit $\x\to \x'$. 

Therefore, initial data for the Wightman function satisfying conditions \eqref{Lem+dataHad}, is not appropriate when interaction is switched on. We purpose two strategies to deal with this difficulty.

One way to address this obstacle is to sequentially solve the system and the Hadamard fundamental solution at each order in $\lambda$, while correcting the initial data for the decoupled problem. This is possible given that at each order-$k$ in $\lambda$ with $k>1$, the terms $H_{\ell, k}$ (cf. \eqref{Hk}) are computed in terms of functions of order $k-1$ in $\lambda$ (See Appendix \ref{sec:HadamardPert}). These functions contain the correct singular structure for the Wightman function of the coupled problem at order $k$, and therefore contain the singular correction required at each order $k$. Our prescription is to use the same renormalisation scale $\ell$ to compute the terms $H_{\ell, k}$ and add these to the initial data. 
	
	Note that further addition of smooth terms at each order originates initial data that is equally valid for the coupled problem as the prescription we have just provided. For example, by using a different renormalisation scale, $\tilde{\ell}$, correction terms $H_{\tilde{\ell}, k}$ will differ from $H_{\ell, k}$ by smooth terms, so data corrected either by $H_{\ell, k}$ or by $H_{\tilde{\ell}, k}$ can be considered equally valid as initial data for the coupled problem. In this sense, our prescription is simply a minimal choice for this correction.
	
	\begin{rem}
		At order-$0$, initial data requires no corrections provided $H_{\ell, 0}=H_{\ell}^0$.
	\end{rem} 
	
This prescription may be summarised as the following correction rules for the initial data
\begin{subequations} \label{DataCorrection}
	\begin{align}
	\,^\Psi G^+_{\varphi \varphi} & \mapsto  \,^\Psi \widetilde{G}^+_{\varphi \varphi} = \,^\Psi G^+_{\varphi \varphi} + \sum_{k = 1}^n \lambda^k H_{\ell,k}=\sum_{k=0}^n \lambda^k \,^\Psi \widetilde{G}^+_{\varphi \varphi,k}, \\
	\,^\Psi G^+_{\varphi \pi} & \mapsto  \,^\Psi \widetilde{G}^+_{\varphi \pi}  = \,^\Psi G^+_{\varphi \pi} + \sum_{k = 1}^n \lambda^k (1 \otimes \nabla_n) H_{\ell,k} =\sum_{k=0}^n \lambda^k \,^\Psi \widetilde{G}^+_{\varphi \pi,k}, \\
	\,^\Psi G^+_{\pi \varphi} & \mapsto  \,^\Psi \widetilde{G}^+_{\pi \varphi}= \,^\Psi G^+_{\pi \varphi} + \sum_{k = 1}^n \lambda^k (\nabla_n \otimes 1) H_{\ell,k} =\sum_{k=0}^n \lambda^k \,^\Psi \widetilde{G}^+_{\pi \varphi,k}, \\
	\,^\Psi G^+_{\pi \pi} & \mapsto  \,^\Psi \widetilde{G}^+_{\pi \pi} = \,^\Psi G^+_{\pi \pi} + \sum_{k = 1}^n \lambda^k (\nabla_n \otimes \nabla_n) H_{\ell,k}=\sum_{k=0}^n \lambda^k \,^\Psi \widetilde{G}^+_ {\pi \pi,k}.
	\end{align}
\end{subequations}

The system then can be solved following the constructive procedure outlined in the proof of Theorem \ref{thm:main}, with $\chi=1$ for all times, $\Sigma_i$ can be taken to be any Cauchy hypersurface $\Sigma$, and a relevant difference, namely, that the perturbative two-point function $\,^\Psi G^+$ must be computed taking into account that initial data contains now corrections at each order in $\lambda$, so that eq. \eqref{GxyExp} is modified resulting in
\begin{align}
\,^\Psi G^+_\epsilon (\x, \y)& = \sum_{i=0}^{n} (\,^\Psi G^+_\epsilon)_i (\x, \y) \lambda^i + O(\lambda^{n+1}) \nonumber \\ 
& = \sum_{i=0}^{n} \lambda^i \sum_{k=0}^{i} \sum_{j=0}^{i-k} \lim_{\epsilon \to 0^+} \left[ \int_{\Sigma \times \Sigma} \dd \Sigma(\underline{\x}') \dd \Sigma (\underline{\y}') \,^\Psi \widetilde{G}^+_{\varphi\varphi \,\epsilon,k}(\underline{\x}',\underline{\y}') E_j(\underline{\x}',\x)E_{i-k-j}(\underline{\y}',\y) \nonumber \right. \\
& - \int_{\Sigma \times \Sigma} \dd \Sigma(\underline{\x}') \dd \Sigma (\underline{\y}') \,^\Psi \widetilde{G}^+_{\varphi\pi \,\epsilon,k}(\underline{\x}',\underline{\y}') \left( (n^{a'} \nabla_{a'})_{\underline{\x}'} E_j(\underline{\x}',\x)\right) E_{i-k-j}(\underline{\y}',\y) \nonumber \\
& - \int_{\Sigma \times \Sigma} \dd \Sigma(\underline{\x}') \dd \Sigma (\underline{\y}') \,^\Psi \widetilde{G}^+_{\pi\varphi \,\epsilon,k}(\underline{\x}',\underline{\y}') E_j(\underline{\x}',\x) \left( (n^{b'} \nabla_{b'})_{\underline{\y}'} E_{i-k-j}(\underline{\y}',\y)\right) \nonumber \\
& \left. + \int_{\Sigma \times \Sigma} \dd \Sigma(\underline{\x}') \dd \Sigma (\underline{\y}') \,^\Psi \widetilde{G}^+_{\varphi\varphi \,\epsilon,k}(\underline{\x}',\underline{\y}') \left( (n^{a'} \nabla_{a'})_{\underline{\x}'} E_j(\underline{\x}',\x)\right) \left((n^{b'} \nabla_{b'})_{\underline{\y}'} E_{i-k-j}(\underline{\y}',\y)\right) \right] \nonumber \\
& + O(\lambda^{n+1}).
\label{GxyExpCorrected}
\end{align}
It still holds that each two-point function coefficient $\,^\Psi G^+_k$ depends at most on the function coefficients $\psi_j$ with $j\in \{0, \dots, k-1\}$ provided that the bi-function coefficients $H_{\ell,k}$ depend on  $\psi_j$ with $j\in \{0, \dots, k-1\}$, just like for the $E^\pm_k$ propagator coefficients.

A different approach is to solve the system of interacting scalars with the switching function and initial data for the decoupled problem, just like in Theorem \ref{thm:main}, and generate initial data already in the form of a perturbative series of order-$n$ in $\lambda$ by evolving the system up to a Cauchy hypersurface such that $\Sigma_{\chi} \cap \operatorname{supp} \chi \neq \emptyset$. Then, take the new initial data to be 
\begin{subequations}
	\begin{align}
	\widetilde{\varsigma} &=\sum_{k=0}^n \lambda^k \widetilde{\varsigma}_k =\psi|_{\Sigma_{\chi}}, \\
	\widetilde{\varpi} &=\sum_{k=0}^n \lambda^k \widetilde{\varpi}_k =\nabla_n\psi|_{\Sigma_{\chi}},  
	\end{align}
	\begin{align}
	\,^\Psi \widetilde{G}^+_{\varphi \varphi} &=\sum_{k=0}^n \lambda^k \,^\Psi \widetilde{G}^+_{\varphi \varphi,k}= \,^\Psi G^+|_{\Sigma_\chi}, \\
	\,^\Psi \widetilde{G}^+_{\varphi \pi}  &= \sum_{k=0}^n \lambda^k \,^\Psi \widetilde{G}^+_{\varphi \pi,k}= (\nabla_n \otimes 1)\,^\Psi G^+|_{\Sigma_\chi}, \\
	\,^\Psi \widetilde{G}^+_{\pi \varphi} &=\sum_{k=0}^n \lambda^k \,^\Psi \widetilde{G}^+_{\pi \varphi,k}=  (1 \otimes \nabla_n)\,^\Psi G^+|_{\Sigma_\chi}, \\
	\,^\Psi \widetilde{G}^+_{\pi \pi} &=\sum_{k=0}^n \lambda^k \,^\Psi \widetilde{G}^+_ {\pi \pi,k} = (\nabla_n \otimes \nabla_n)\,^\Psi G^+|_{\Sigma_\chi},
	\end{align}
\end{subequations}
and proceed to solve the system as in the proof of Theorem \ref{thm:main} considering $\chi=1$ everywhere, exchange eq. \eqref{GxyExp} for \eqref{GxyExpCorrected}, take
 	\begin{subequations}
 	\label{fksCorrected}
 	\begin{align}
 	& \left\{
 	\begin{array}{rl}
 	\psi_k|_\Sigma & = \widetilde{\varsigma}_k, \\
 	(\nabla_n \psi_k)|_\Sigma & = \widetilde{\varpi}_k,
 	\end{array}
 	\right. 
 	\end{align}
 \end{subequations}
 for $k\in \{ 0, 1, \dots n \}$ instead of \eqref{fks}, and
 	\begin{align}
 	\psi_k (\x) = \int_{\Sigma} \, \dd\Sigma ({\underline{\y}}) \, \Big[ E_M(\x,{\underline{\y}}) (\widetilde\varpi({\underline{\y}})-\Pi_k({\underline{\y}})) -  (\widetilde\varsigma ({\underline{\y}}) - \mathcal{J}_k({\underline{\y}}))(\nabla_n)_{{\underline{\y}}} E_M(\x,{\underline{\y}})\Big] +  E_M^+ J_k, \label{psikSolSwOn}
 	\end{align}
 for $k \in \{0, \ldots, n\}$ instead of \eqref{psikSol},  keeping the definitions for $J_k$ (with $J_0=0$), $\mathcal{J}_k$, $\Pi_k$, and $\mathfrak{W}_{\ell,k}$, with $\,^\Psi G^+_k$ given by \eqref{GxyExpCorrected}.
 
 This effectively allows to obtain a solution for a system that is always coupled, from a solution of a system that decouples outside of the support of some switching function $\chi$. 
 
 These two strategies show how it is possible to relax the requirement that initial data be given in a region where the system is decoupled, and how to generalise our constructive procedure of solutions for data given in the form of a perturbative series obtained from evaluating a solution that is already known to be a consistent semiclassical system at fixed order in the coupling.

%=====================================================================================================
% CONCLUSIONS
%=====================================================================================================
\section{Conclusions}
\label{sec:Outlook}

In this work, we have studied a weakly-coupled semiclassical system as an initial value problem, with the aims of understanding general properties of the semiclassical gravity equations. The particular model that we have studied is that of a classical scalar field $\psi$ weakly coupled to a quantum scalar $\Phi$, and in which the renormalisation of the expectation value of $\Phi^2$ needs to be performed as one solves the dynamical system. 
While the toy model developed in this work does not attempt to directly represent any physical system, its formal structure contains key elements that are a reminiscence of semiclassical gravity or semiclassical electrodynamics.

We have shown that it is possible to complete the analysis as a perturbative problem in the weak coupling, thus obtaining a perturbative asymptotic series that defines the classical field, up to a given power in the coupling, as well as a perturbative series for the Wightman two-point function of the quantum field, which is Hadamard in a precise perturbative sense, as well as for the expectation value of $\Phi^2$.

A key point in our analysis was the introduction of a switching function for the interaction between the fields that allows one to prescribe initial data in the ``free" regime, before the interaction begins. This turns out to be an important element in constructing a two-point function of the quantum field that satisfies the Hadamard condition.

 Aiming to relax the need for a switching function, we have proposed that, in a perturbative sense, initial data corresponds only to a zeroth-order contribution to the ``real" interacting initial data.  Therefore, one has to correct the initial data at each order in the coupling parameter to be able to recover a consistent semiclassical configuration defining a fixed order Hadamard state.  

A lesson for semiclassical gravity coming from this  work is that, although the present results   are   only  perturbative (to any finite order), they represent hope that a similar non perturbative version might be valid and thus also that the semiclassical system might be well posed in the sense that the renormalisation might be carried out together with the evolution of spacetime and of the quantum fields that are being constructed, while the state of the matter fields remains of Hadamard form throughout spacetime.  In other words    that given   suitable  initial data  for  the spacetime metric  and   Hadamard  like  two-point functions  (consistent  with the theory's constraints) on an  initial value hypersurface, one can construct a semiclassical configuration.

In the context of semiclassical gravity, we must note that there does not seem to be any reasonable  sense in which the gravity-matter interaction might  be ``switched off" in the past of any initial data hypersurface, negating the possibility of using the same strategy followed here. We think, however, that in such contexts, it is possible to provide initial data that has already all the elements that are needed to construct a state of Hadamard form for the fully gravity-matter interacting theory. These issues will be the subject of future work.

\section{Acknowledgements}

The authors acknowledge Prof. Bernard S. Kay for suggesting this problem to us during a fruitful collaboration visit to UNAM in Mexico City, as well as for his careful reading of this manuscript, providing valuable comments and corrections. B.A.J.-A. is supported by a DGAPA-UNAM postdoctoral fellowship. T.M. is supported by a CONACYT PhD. grant No. 280721. D.S. acknowledges partial financial support from the grants CONACYT No. 101712, and PAPIIT-UNAM No. IG100316, Mexico, as well as sabbatical fellowships from PASPA-DGAPA-UNAM and from Fulbright-Garcia Robles- COMEXUS. 

%=====================================================================================================
% APPENDIX
%=====================================================================================================
\appendix

\section{Initial value formulation for two-point functions}\label{sec:IVF}

Assume $(\mathcal{M},g)$ is a globally hyperbolic spacetime satisfying our requirements in sec. \ref{sec:Notation}. For the linear hyperbolic operator $P: C_0^\infty(\mathcal{M})\rightarrow C^{\infty}(\mathcal{M})$ given by
\begin{equation}
P f = \Box f- U f,
\end{equation}
for some test function $f\in C_0^\infty (\mathcal{M})$ and a potential $U\in C^{\infty}(\mathcal{M})$, a unique two-point function can be obtained from initial two-point function data on a Cauchy surface $\Sigma \subset \mathcal{M}$. The starting point are the unique\footnote{This uniqueness is due to global hyperbolicity and the normally hyperbolic form of the operator $P$.} advanced and retarded propagator operators, $E^+$ and $E^-$, $E^{\pm}: C_0^\infty(\mathcal{M})\rightarrow C^{\infty}(\mathcal{M})$, such that
\begin{align}
P E^{\pm}f = E^{\pm} P f = f, \label{PE}
\end{align}
and for any $f\in C_0^\infty$, with support properties
\begin{equation}
\operatorname{supp} (E^{\pm} f) \subset J^{\pm } \operatorname{supp}(f). \label{SuppE}
\end{equation}
Note that in distributional terms, we might write
\begin{equation}
(E^{\pm} f)(\x) = \int_{\mathcal{M}} \dd \vol(\y), E^{\pm}(\x,\y) f(\y),
\end{equation}
and the biscalars $E^{\pm} (x,y)$ are such that
\begin{subequations} \label{PEDelta}
	\begin{align}
	P_x E^{\pm}(\x,\y) &=\delta(\x,\y) /(-\det g_{\mu\nu}(\y)), \\
	P_y E^{\pm}(\x,\y) &=\delta(\x,\y)/(-\det g_{\mu\nu}(\x)).
	\end{align}
\end{subequations}
Note that equation \eqref{PE} in distributional kernel notation reads
\begin{equation}
\int_\mathcal{M} \dd \vol(\y) \, P_\x [E^{\pm}(\x,\y) f(\y) ]= \int_\mathcal{M} \dd \vol(\y) \, E^{\pm}(\x,\y) (P f) (\y)= f(\x).
\end{equation}

Note also that we have the following relation under exchange of evaluation points for $E^+$ and $E^-$, 
\begin{equation}
E^-(\x,\y) = E^+(\y,\x), \label{SymE}
\end{equation}
that can be derived from the uniqueness of propagators with support properties given by \eqref{SuppE}.

For a compact region $D \subset \mathcal{M}$, and smooth (or sufficiently regular) functions $u,v \in C^\infty(M)$ defined on $D$, Green's identity states\footnote{Note that $ u P v - v P u = u \Box v - v \Box u$.},
\begin{equation}
\int_{D}\dd\vol(\y)\, \left[ u(\y) P_\y v(\y) - v(\y) P_\y u(\y) \right] = \int_{\partial D} \dd S(\underline{\y}) \,\left[u(\underline{\y}) (\nabla_{n} v)(\underline{\y}) -v(\underline{\y}) (\nabla_{n} u)(\underline{\y}) \right], \label{GreenTh}
\end{equation}
where $\nabla_n$ is the outward normal derivative on $\partial D$, the boundary of $D$. Following the argument in \cite[Lemma A.1]{Dimock}, if $u$ is a smooth solution to $P u = 0$, and we take $v = E^+ f$ for some $f\in C_0^{\infty}$ as well as $D=D^-=J^-(\Sigma)-\Sigma$, we have
\begin{align}
\int_{D^-} & \dd \vol(\y) \, u(\y) f(\y) = \int_{\Sigma} \dd \Sigma(\underline{\y}) \,\left[u(\underline{\y}) (\nabla_n [E^+ f])(\underline{\y}) - (E^+f)(\underline{\y}) (\nabla_n u)(\underline{\y}) \right] \nonumber\\
&= \int_{\Sigma} \dd \Sigma(\underline{\y}) \,\left[u(\underline{\y}) \nabla_{n\, \underline{\y}} \int_{\mathcal{M}} \dd \vol(\z) E^+(\underline{\y},\z) f(\z) - (\nabla_n u)(\underline{\y}) \int_{\mathcal{M}} \dd \vol(\z) E^+ (\underline{\y},\z) f(\z) \right]. \label{Dmin}
\end{align}
Note that $E^+f $ has no support at the past infinity boundary of $D$, therefore such boundary term is not present in this expression.

Setting $v = E^- f$ and $D^+=J^+(\Sigma)-\Sigma$, we get
\begin{align}
\int_{D^+} &\dd\vol (\y) \, u(\y) f(\y) \nonumber\\
&= - \int_{\Sigma} \dd \Sigma(\underline{\y}) \,\left[ u(\y) \nabla_{n\,\underline{\y}}\int_{\mathcal{M}} \dd\vol (\z) E^-(\underline{\y},\z) f(\z) - (\nabla_{n} u)(\underline{\y}) \int_{\mathcal{M}} \dd\vol (\z) E^- (\underline{\y},\z) f(\z) \right]. \label{Dplus}
\end{align}

Then, given the data $u_0 = u|_{\Sigma}, \dot u_0 = \nabla_n u |_{\Sigma}$, adding \eqref{Dmin} and \eqref{Dplus} yields
\begin{align}
u(f) &= \int_{\Sigma} \dd\Sigma(\underline{\y})\,\left[u(\underline{\y}) \nabla_{n\, \underline{\y}} - (\nabla_{n} u)(\underline{\y}) \right] \int_{\mathcal{M}} \dd\vol(\z) \left[E^+(\underline{\y},\z)-E^-(\underline{\y},\z)\right] f(\z). \label{UInitial}
\end{align}

 Let $F(\x,\x'): C_0^\infty( \mathcal{M}\times \mathcal{M})$ be a bi-solution to $P$, ie.,
\begin{align}
P_\x F(\x,\x')&=0, \label{PF0}\\
P_{\x'} F(\x,\x') &= 0.\label{PF1}
\end{align}
Assume $F$ is $L^1(\mathcal{M}\times\mathcal{M})$ so that a bi-distribution $F$ can be defined for any $f,g\in C_0^\infty(\mathcal{M})$ by
\begin{equation}
	F(f,g) \equiv \int_{\mathcal{M}\times \mathcal{M}} \dd\vol(\x) \, \dd\vol (\x') \, F(\x,\x') f(\x) g(\x'). \label{DistFfg}
\end{equation}
For each $g$, by means of eq. \eqref{UInitial} we have
\begin{equation}
F(f,g) = \int_{\Sigma} \dd\Sigma(\underline{\x})\,\left[F(\underline{\x},g) \nabla_{n\, \underline{\x}} - (\nabla_{n\, \underline{\x}} F(\underline{\x},g)) \right] \int_{\mathcal{M}} \dd\vol(\z) \left[E^+(\underline{\x},\z)-E^-(\underline{\x},\z)\right] f(\z). \label{Green1}
\end{equation}
Then, for every $\underline{\x}\in\Sigma$ we also have
\begin{equation}
F(\underline{\x},g) = \int_{\Sigma} \dd\Sigma(\underline{\y})\,\left[F(\underline{\x},\underline{\y}) \nabla_{n\, \underline{y}} - (\nabla_{n\, \underline{y}} F(\underline{\x},\underline{\y})) \right] \int_{\mathcal{M}} \dd\vol(\z') \left[E^+(\underline{\y},\z')-E^-(\underline{\y},\z')\right] g(\z'). \label{Green2}
\end{equation}
Substituting \eqref{Green2} in \eqref{Green1} we get the final result
\begin{align}
F(f,g) &= \int_{\mathcal{M}\times \mathcal{M}} \dd\vol(\z)\dd\vol(\z') f(\z)g(\z') \nonumber\\
&\quad \times \left \lbrace \int_{\Sigma\times\Sigma} \dd\Sigma(\underline{\x})\,\dd\Sigma(\underline{\y})\, F(\underline{\x},\underline{\y}) \left[\nabla_{n\, \underline{\x}}E(\underline{\x},\z)\right] \left[\nabla_{n\, \underline{y}} E(\underline{\y},\z')\right]\right. \nonumber\\
&\qquad - \int_{\Sigma\times\Sigma} \dd\Sigma(\underline{\x})\,\dd\Sigma(\underline{\y})\,(\nabla_{n\, \underline{y}} F(\underline{\x},\underline{\y})) \left[\nabla_{n\, \underline{\x}}E(\underline{\x},\z)\right] \left[E(\underline{\y},\z')\right]\nonumber\\
&\qquad- \int_{\Sigma\times\Sigma} \dd\Sigma(\underline{\x})\,\dd\Sigma(\underline{\y})\,(\nabla_{n\, \underline{\x}} F(\underline{\x},\underline{\y}) )\nabla_{n\, \underline{y}} \left[E(\underline{\x},\z)\right] \left[E(\underline{\y},\z')\right] \nonumber\\ 
&\qquad \left. + \int_{\Sigma\times\Sigma} \dd\Sigma(\underline{\x})\,\dd\Sigma(\underline{\y})\, (\nabla_{n\, \underline{\x}} \nabla_{n\, \underline{y}} F(\underline{\x},\underline{\y})) \left[E(\underline{\x},\z)\right] \left[E(\underline{\y},\z')\right] \right\rbrace,
 \label{Ffg}
\end{align}
where we have used the advanced minus retarded propagator, $E=E^+ - E^-$. The quantity within braces can be directly identified with $F(\x,\y)$ according to \eqref{DistFfg},
\begin{align}
F(x,y) &= \int_{\Sigma\times\Sigma} \dd\Sigma(\underline{\x})\,\dd\Sigma(\underline{\y})\, F(\underline{\x},\underline{\y}) \left[\nabla_{n\, \underline{\x}}E(\underline{\x},\z)\right] \left[\nabla_{n\, \underline{y}} E(\underline{\y},\z')\right]\nonumber\\
&\qquad - \int_{\Sigma\times\Sigma} \dd\Sigma(\underline{\x})\,\dd\Sigma(\underline{\y})\,(\nabla_{n\, \underline{\y}} F(\underline{\x},\underline{\y})) \left[\nabla_{n\, \underline{\x}}E(\underline{\x},\z)\right] \left[E(\underline{\y},\z')\right]\nonumber\\
&\qquad- \int_{\Sigma\times\Sigma} \dd\Sigma(\underline{\x})\,\dd\Sigma(\underline{\y})\,(\nabla_{n\, \underline{\x}} F(\underline{\x},\underline{\y}) )\nabla_{n\, \underline{\y}} \left[E(\underline{\x},\z)\right] \left[E(\underline{\y},\z')\right] \nonumber\\ 
&\qquad + \int_{\Sigma\times\Sigma} \dd\Sigma(\underline{\x})\,\dd\Sigma(\underline{\y})\, (\nabla_{n\, \underline{\x}} \nabla_{n\, \underline{\y}} F(\underline{\x},\underline{\y})) \left[E(\underline{\x},\z)\right] \left[E(\underline{\y},\z')\right] .
\label{Fxy}
\end{align}
This explicitly shows how to reconstruct the complete bi-solution $F(\x,\x')$ defined on $\mathcal{M}\times \mathcal{M}$ by \textit{initial data} given on $\Sigma\times\Sigma$, corresponding to the two point functions $L^1(\Sigma\times \Sigma)$ defined for $\underline{\x}, \underline{\y} \in \Sigma$ by
\begin{align}
	F_{00}(\underline{\x},\underline{\y}) &= F(\underline{\x},\underline{\y}),\\
	F_{01}(\underline{\x},\underline{\y}) &= \nabla_{n\, \underline{\y}}F(\underline{\x},\underline{\y}) = \lim_{\y \to \underline{\y}}[\nabla_{n\, \y}F(\underline{\x},\y)],\\
	F_{10}(\underline{\x},\underline{\y}) &= \nabla_{n\, \underline{\x}}F(\underline{\x},\underline{\y})= \lim_{\x \to \underline{\x}}[\nabla_{n\, \x}F(\x,\underline{\y})],\\
	F_{11}(\underline{\x},\underline{\y}) &=\nabla_{n\, \underline{\x}} \nabla_{n\, \underline{\y}}F(\underline{\x},\underline{\y})= \lim_{\stackrel{\x \to \underline{\x}}{\y \to \underline{\y}}}[\nabla_{n\, \y} \nabla_{n\, \y}F(\x,\y)].
\end{align}
These initial value two-point functions define bi-distributions on $C_0^\infty(\Sigma)\times C_0^\infty(\Sigma)$, which we denote by $F_{00}=F|_\Sigma=$, $F_{01}=(1\otimes \nabla_n )F|_{\Sigma}$, $F_{10}= (\nabla_n \otimes 1)F|_{\Sigma}$ and $F_{11}=(\nabla_n \otimes \nabla_n )F|_{\Sigma}$, in terms of which eq. \eqref{Ffg} has a more compact form by using distributional notation,
\begin{align}
	F(f,g) &= F_{00}(\nabla_n E f|_{\Sigma}, \nabla_n E g|_{\Sigma}) - F_{01}(\nabla_n E f|_{\Sigma}, E g|_{\Sigma}) \nonumber \\
	&\quad - F_{10}(E f|_{\Sigma}, \nabla_n E g|_{\Sigma}) + F_{11}(E f|_{\Sigma}, E g|_{\Sigma}) .
\end{align}

This result holds for two point functions as long as these are regular and smooth; however, distributions can still be defined by the standard \textit{integrate then take the limit} prescription, for which, given regularised integral kernels $F_{00\,\epsilon}$, $F_{01\,\epsilon}$, $F_{10\,\epsilon}$ and $F_{11\,\epsilon}$, 
\begin{align}
	F(\x,\y) &= \lim_{\epsilon \to 0^+} \Bigg\lbrace \int_{\Sigma\times\Sigma} \dd\Sigma(\underline{\x})\,\dd\Sigma(\underline{\y})\, F_{00\,\epsilon}(\underline{\x},\underline{\y}) \left[\nabla_{n\, \underline{\x}}E(\underline{\x},\z)\right] \left[\nabla_{n\, \underline{y}} E(\underline{\y},\z')\right]\nonumber\\
	&\qquad - \int_{\Sigma\times\Sigma} \dd\Sigma(\underline{\x})\,\dd\Sigma(\underline{\y})\,F_{01\,\epsilon}(\underline{\x},\underline{\y})) \left[\nabla_{n\, \underline{\x}}E(\underline{\x},\z)\right] \left[E(\underline{\y},\z')\right]\nonumber\\
	&\qquad- \int_{\Sigma\times\Sigma} \dd\Sigma(\underline{\x})\,\dd\Sigma(\underline{\y})\,F_{10\,\epsilon}(\underline{\x},\underline{\y}) )\nabla_{n\, \underline{\y}} \left[E(\underline{\x},\z)\right] \left[E(\underline{\y},\z')\right] \nonumber\\ 
	&\qquad + \int_{\Sigma\times\Sigma} \dd\Sigma(\underline{\x})\,\dd\Sigma(\underline{\y})\,F_{11\,\epsilon}(\underline{\x},\underline{\y})) \left[E(\underline{\x},\z)\right] \left[E(\underline{\y},\z')\right] \Bigg\rbrace.
\end{align}

\section{Perturbative expansion of advanced and retarded propagators}
\label{sec:PertE}

Let $E_0^-$ and $E_0^+$ be the advanced and retarded propagators, respectively, for a certain hyperbolic operator $P_0$, over a spacetime with compact spatial section $(\mathcal{M},g)$. Let $\lambda>0$ and $V \in C^\infty(\mathcal{M})$ to be a scalar potential, and assume a second operator $P$ is defined as 
\begin{equation}
P = P_0 + \lambda V, \label{Ppert}
\end{equation}
such that $P$ is also hyperbolic, therefore, existence and unicity of the corresponding advanced and retarded propagators $E^-$ and $E^+$ for $P$ is granted. Assume $\lambda \ll 1$ is a perturbative parameter and that $E^\pm$ admits an expansion of the form
\begin{equation}
E^\pm = \sum_{n=0}^N \lambda^n \epsilon^{\pm}_n + O(\lambda^{n+1}), \label{ExpEPpert}
\end{equation}
up to a fixed order $N \in \mathbb{N}$, where each $\epsilon_n^\pm: \, C_0^\infty \rightarrow C^\infty$ satisfies the support property,
\begin{equation}
\operatorname{Supp}(\epsilon_n^\pm f) \subset J^{\pm} (\operatorname{Supp} \, f), \label{SuppEpsilon}
\end{equation} 
enforcing property \eqref{SuppE} for $E^{\pm}$ at each order in $\lambda$.

The assumption that an expansion of the form of eq. \eqref{ExpEPpert} will generally work, for example, if $P_0$ is the Plein-Gordon operator for a massive field and $V$ is a smooth potential. In the case of a massless field additional logarithmic terms are needed in the expansion.

The defining equation for $E^\pm$ is \eqref{PE}, which in this case reads
\begin{equation}
(P_0 + \lambda V) \left(\sum_{n=0}^N \lambda^n \epsilon^{\pm}_n + O(\lambda^{n+1}) \right) = 0,
\end{equation}	
\begin{equation}
P_0 \epsilon^{\pm}_0 + \sum_{n=1}^N \lambda^n (P_0 \epsilon^{\pm}_{n} + V \epsilon^{\pm}_{n-1}) + O(\lambda^{N+1}) = 0.
\end{equation}

Enforcing the above equation order by order yields the relations, for $\epsilon^{\pm}_n$ with $n>0$,
\begin{equation}
P_0 \epsilon^{\pm}_n + V \epsilon^{\pm}_{n-1} = 0,
\end{equation}
and 
\begin{equation}
P_0 \epsilon^{\pm}_0 = 0, \label{P0}
\end{equation}
along with the aforementioned support conditions. This conditions and \eqref{P0}, by means of the uniqueness of $E_0^\pm$ implies 
\begin{equation}
\epsilon^{\pm}_0 = E_0^{\pm}.
\end{equation}
By application of $E_0^{\pm}$ on the recursion relation for $n=1$, we get
\begin{equation}
\epsilon^{\pm}_1 = - E_0^{\pm} V E_0^{\pm},
\end{equation}
which in turn satisfies the required support condition for $\epsilon^{\pm}_1$, equation \eqref{SuppEpsilon}. Taking this procedure sequentially, it can be seen that
\begin{equation*}
\epsilon_n^{\pm} = (- E_0^{\pm} V )^n E_0. 
\end{equation*}
Therefore, the perturbative expansion for $E^{\pm}$ in terms of $E_0$ is
\begin{equation}
E^\pm = \sum_{n=0}^N \lambda^n (- E_0^{\pm} V )^n E_0 + O(\lambda^{n+1}). \label{ExpEPpertE0}
\end{equation}

Note that by property \eqref{PE}, we can also carry out this procedure by solving 
\begin{equation}
\left(\sum_{n=0}^N \lambda^n \epsilon^{\pm}_n + O(\lambda^{n+1}) \right) (P_0 + \lambda V) = 0,
\end{equation}
which yields the recursion relations
\begin{align}
\epsilon^{\pm}_n P_0 + \epsilon^{\pm}_{n-1} V= 0\,: n>0 ,\\
\epsilon^{\pm}_0 P_0 = 0, 
\end{align}
where application of $E_0^{\pm}$ on the right yields the corresponding result 
\begin{equation}
E^\pm =\sum_{n=0}^N \lambda^n E_0 (- V E_0^{\pm} )^n + O(\lambda^{n+1}). \label{ExpEPpertE0Alt}
\end{equation}

In terms of distributional kernels and a test function $f$, equations \eqref{ExpEPpertE0} and \eqref{ExpEPpertE0Alt} reads
\begin{align}
(Ef)(\x) &= \sum_{n=0}^N (-1)^n \lambda^n \int_{\mathcal{M}} \dd\vol(\y) \, E_0^{\pm}(\x,\y^{(n)}) V(\y^{(n)}) \int_{\mathcal{M}} \dd\vol(\y^{(n-1)})\, E_0^{\pm}(\y^{(n)},\y^{(n-1)}) V(\y^{(n-1)}) \nonumber \\
&\quad \int_{\mathcal{M}} \cdots \int_{\mathcal{M}} \dd\vol(\y^{(1)}) \, E_0^{\pm}(\y^{(2)},\y^{(1)}) V(\y^{(1)}) \int_{\mathcal{M}} \dd\vol(\y) \, E_0^{\pm}(\y^{(1)},\y) f(\y),\\
&= \sum_{n=0}^N (-1)^n \lambda^n \int_{\mathcal{M}} \dd\vol(\y) \, E_0^{\pm}(\x,\y^{(n)}) \int_{\mathcal{M}} \dd\vol(\y^{n-1}) \, V(\y^{(n-1)}) E_0^{\pm}(\y^{(n)},\y^{(n-1)}) \nonumber \\
&\quad \int_{\mathcal{M}} \cdots \int_{\mathcal{M}} \dd\vol(\y^{(1)}) \, V(\y^{(1)}) E_0^{\pm}(\y^{(2)},\y^{(1)})\int_{\mathcal{M}} \dd\vol(\y) \, V(\y) E_0^{\pm}(\y^{(1)},\y) f(\y) .
\end{align}

\section{Perturbative expansion of the Hadamard fundamental solution in weak coupling}\label{sec:HadamardPert}
In section \ref{subsec:Perturb} we have assumed the expansion coefficients $H_{\ell\, k}^0$ of the Hadamard fundamental solution shares the same functional dependence on $\psi_k$ than $G^+_k$. We will see this is the case according to the following lemma,
\begin{lemma}
	Let $P = \Box - m^2 - 2\lambda \psi $, $\psi \in C^\infty(\mathcal{M})$ and $\lambda \in \mathbb{R}$ a perturbative parameter for $\psi$, $|\lambda| \ll 1$, and assume $\psi$ itself depends on the parameter $\lambda$, such that the order-$n$ approximation of $\psi$ as a function of the perturbative parameter reads
	\begin{equation}
	\psi = \sum_{k=0}^{\tilde{n}} \lambda^k \psi_k + O(\lambda^{\tilde{n}}). \label{ExpandPsi}
	\end{equation}
	Let $H_\ell$ be the Hadamard fundamental solution for $P$, ie, a solution of the equation
	\begin{equation}
	(\Box_\x - m^2 - 2 \lambda \psi (\x)) H_{\ell,k} (\x,\x') = 0, \label{DefHpsi}
	\end{equation}
	of the form
	\begin{equation*}
	\hspace{0.08 \linewidth}H_\ell(\x,\x') = \frac{1}{8\pi^2}\left[\frac{\Delta^{1/2}(\x,\x')}{\sigma_\epsilon(\x,\x')} + v(\x,\x') \ln \left(\sigma_\epsilon(\x,\x')/\ell^2 \right) + w^{\ell}(\x, \x')\right], \hspace{0.07 \linewidth}\eqref{HadamardF}
	\end{equation*}
	just like in definition \ref{def:Hadamard}, such that for the regularisation parameter $\ell$, the smooth and regular function $w^{\ell}$ is order $\sigma$. Consider the order-$\tilde{n}$ approximation of $H_\ell$,
	\begin{equation}
	H_\ell = \sum_{k=0}^{\tilde{n}} \lambda^k H_{\ell\, k} + O(\lambda^{\tilde{n}+1}). \label{ExpandH}
	\end{equation} 
	Then, each term $ H_{\ell\, k}$ depends only on $\psi_{k-1}, \psi_{k-2}, \dots, \psi_0$.
\end{lemma}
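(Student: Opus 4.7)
The plan is to use the standard Hadamard recursion relations for the bi-functions $v$ and $w^\ell$, combined with the observation that $P = P_0 - 2\lambda\psi$ depends affinely on $\lambda$, and argue by induction on the order in $\lambda$.

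First I would note that $\sigma$ and $\Delta^{1/2}$ appearing in \eqref{HadamardF} are independent of the potential, so all $\lambda$-dependence of $H_\ell$ is carried by $v$ and $w^\ell$. Writing the covariant Taylor expansions $v = \sum_{n\geq 0} v_{(n)}\sigma^n$ and $w^\ell = \sum_{n\geq 0} w^\ell_{(n)}\sigma^n$, the Hadamard recursion relations take the schematic form $\mathcal{T}_0 v_{(0)} = -P\Delta^{1/2}$ and $\mathcal{T}_{n+1} v_{(n+1)} = -P v_{(n)}$, and analogously for $w^\ell_{(n)}$ once the free smooth datum $w^\ell_{(0)}$ is fixed and taken $\lambda$-independent. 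Here each $\mathcal{T}_n$ is a first-order linear transport operator along the geodesic from $\x'$ to $\x$, built purely from metric data and therefore $\lambda$-independent.

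Next, I would substitute $\psi = \sum_j \lambda^j \psi_j$ so that $P = P_0 - 2\sum_{j\geq 1}\lambda^j \psi_{j-1}$, and further expand $v_{(n)} = \sum_k \lambda^k v_{(n),k}$. Matching powers of $\lambda$ in the recursion gives, at order $k=0$, the free Hadamard transport equations for $v_{(n),0}$ and $w^\ell_{(n),0}$, with no $\psi$-dependence, and at order $k \geq 1$ equations of the form
\begin{equation*}
\mathcal{T}_0 v_{(0),k} = 2\psi_{k-1}\Delta^{1/2}, \qquad \mathcal{T}_{n+1} v_{(n+1),k} = -P_0 v_{(n),k} + 2\sum_{i=0}^{k-1} \psi_i\, v_{(n),k-1-i},
\end{equation*}
with the identical structure for $w^\ell_{(n),k}$. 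These are first-order transport equations whose sources involve only $\psi_0,\ldots,\psi_{k-1}$ together with coefficients $v_{(n'),j}$ (resp. $w^\ell_{(n'),j}$) with $(n',j)$ strictly preceding $(n,k)$ in a double induction (on $k$ and, for fixed $k$, on $n$). Consequently, each $v_{(n),k}$ and $w^\ell_{(n),k}$ depends on $\psi$ only through $\psi_0,\ldots,\psi_{k-1}$.

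Finally, reassembling $v_k = \sum_{n\geq 0} \sigma^n v_{(n),k}$ and $w^\ell_k = \sum_{n\geq 0}\sigma^n w^\ell_{(n),k}$, and identifying $H_{\ell,k}$ for $k\geq 1$ with $(v_k \ln(\sigma_\epsilon/\ell^2) + w^\ell_k)/(8\pi^2)$ via \eqref{HadamardF}, yields the stated claim (with $H_{\ell,0}$ reducing, as expected, to the Hadamard fundamental solution of the free operator $P_0$). I do not anticipate any serious obstacle: the triangular structure of the perturbation follows from $P$ being affine in $\lambda$ and from the $\mathcal{T}_n$ being $\lambda$-independent, so that the main task is careful bookkeeping of the coupled $\sigma$- and $\lambda$-expansions.
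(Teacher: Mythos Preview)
Your proposal is correct and follows essentially the same route as the paper: expand the Hadamard $\sigma$-coefficients $v_{(n)}$ and $w^\ell_{(n)}$ in $\lambda$, feed the expansion of $\psi$ into the transport-type recursion relations, and read off the triangular dependence on $\psi_0,\ldots,\psi_{k-1}$. The only notable difference is that the paper, after obtaining the same order-by-order transport equations you write, goes one step further and expands each $v_{(n),k}$ and $w^\ell_{(n),k}$ in a covariant Taylor series in $\sigma^{;a}$, so that the recursion becomes purely algebraic at coincidence; this is done to rule out explicitly any hidden dependence on $\psi_k$ through the second argument $\x'$. Your appeal to the $\lambda$-independence of the transport operators $\mathcal{T}_n$ and the uniqueness of their regular solutions achieves the same end more succinctly, so this is a matter of level of detail rather than a different argument.
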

\begin{proof}
	Substitution of the perturbative expansions \eqref{ExpandPsi} and \eqref{ExpandH} in \eqref{DefHpsi} demanding it to hold independently at each order in $\lambda$ yields the following system of equations,
	\begin{subequations} \label{HCoeffs}
		\begin{align}
		&(\Box_x - m^2) H_{\ell,0} (\x, \x') = 0, \\
		&(\Box_x - m^2) H_{\ell,k}(\x,\x') = 2 \sum_{j = 0}^{k-1} \psi_{j}(\x) H_{\ell,k-j-1}(\x,\x'), \hspace{1cm} k \in \{1, \ldots, n\},
		\end{align}
	\end{subequations}
	where just like in the case for $E^\pm$, at each order $k$, $H_{\ell,k}$ explicitly depends only on $\psi_{k-1}, \psi_{k-2}, \dots, \psi_0$. Nevertheless, a functional dependence on $\psi_k(\x')$ is not obviously discarded by the system \eqref{HCoeffs}. However, the complete set of Hadamard recursion relations allow for a constructive procedure where the sequential decoupling becomes explicit for the second entry.
	
	Consider the expansions
	\begin{subequations}\label{ExpansionsUVW}
		\begin{align}
		v(\x,\y) &= \sum_{n=0}^{N} v_n (\x,\y) \sigma^n(\x,\y) + O(\sigma^{N+1}),\label{ExpansionsUVWv}\\
		w_{\ell} (\x,\y) &= \sum_{n=0}^{N} w_n (\x,\y) \sigma^n(\x,\y) + O(\sigma^{N+1}). \label{ExpansionsUVWw}
		\end{align}
	\end{subequations}
	and substitute them in \eqref{DefHpsi} demanding it to hold independently at each order in $\sigma$. This yields
	\begin{equation}
	\Box_{\x} \sigma(\x,\y)= 4-2 \Delta^{-1/2}(\x,\y) \Big((\nabla_x)^a \sigma (\x,\y)\Big) \nabla_a \Delta^{1/2}(\x,\y),
	\end{equation}
	so that $\Delta^{1/2}$ can be identified as the Van-Vleck-Morette determinant,
	\begin{equation}
	\Delta^{1/2} (\x,\y)= - \frac{\operatorname{det}(-(\nabla_{\x})_\mu (\nabla_{\y})_\nu \sigma(\x,\y))}{\sqrt{-g(\x)}\sqrt{-g(\y)}},
	\end{equation}
	as well as the recursion relations for each $v_n $ and $w_n$,
	\begin{subequations}
		\begin{align}
		(1+[\nabla^a \sigma] \nabla_a -\triangle^{1/2} [\nabla_a \triangle^{1/2}] [\nabla_a \sigma]) v_0 &= -\frac{(\Box_x - m^2 - 2 \lambda \psi ) }{2} \triangle^{1/2}, \label{RecV0}\\
		(n+1+[\nabla^a \sigma]\nabla_a -\triangle^{-1/2}[\nabla^a \sigma][\nabla_a \triangle^{1/2}])v_{n} &= -\frac{(\Box_x - m^2 - 2 \lambda \psi )}{2n} v_{n-1}\, : n>0 , \label{RecVn}
		\end{align}
		\begin{align}
		\left( n + 2-\triangle^{-1/2}[\nabla^a\sigma ][\nabla_a \triangle^{1/2}]+ \nabla^a \sigma \nabla_a \right) w_{n+1} &= \nonumber \\
		- \frac{1}{2(n+1)}\Big[2 (2[n+1]- \triangle^{-1/2}[\nabla^a\sigma ][\nabla_a \triangle^{1/2}] &+ \nabla^a \sigma \nabla_a) v_{n+1} + (\Box_x - m^2 - 2 \lambda \psi ) w_n \Big], \label{RecW}
		\end{align}
	\end{subequations}
	where it has been used the fact that $\sigma(x,x')$ satisfies
	\begin{equation}
	2 \sigma (x,y) = g^{ab} \nabla_a \sigma (x,y) \nabla_b \sigma (x,y).
	\end{equation}
	
	Expanding now every $v_n$ and $w_n $ up to order order-$\tilde{n}$ in $\lambda$,
	\begin{subequations}
		\begin{align}
		v_n &= \sum_{k= 0}^{\tilde{n}}\lambda^k v_{n,i} + O(\lambda^{\tilde{n}+1}) ,\\
		w_n &= \sum_{k= 0}^{\tilde{n}} \lambda^k w_{n,i} + O(\lambda^{\tilde{n}+1}) ,
		\end{align}
	\end{subequations}
	and requiring the recursion relations to hold independently at every order in $\lambda$ yields the following system for the coefficients $v_{n,i}$,
	\begin{subequations} \label{RecVni}
		\begin{align}
		(1+[\nabla^a \sigma] \nabla_a -\triangle^{1/2} [\nabla_a \triangle^{1/2}] [\nabla_a \sigma]) v_{0,0} &= -\frac{(\Box_x - m^2) }{2} \triangle^{1/2}, \\
		(1+[\nabla^a \sigma] \nabla_a -\triangle^{1/2} [\nabla_a \triangle^{1/2}] [\nabla_a \sigma]) v_{0,k} &= \psi_{k-1} \triangle^{1/2},\\
		(n+1+[\nabla^a \sigma]\nabla_a -\triangle^{-1/2}[\nabla^a \sigma][\nabla_a \triangle^{1/2}])v_{n,0} &= -\frac{(\Box_x - m^2)}{2n} v_{n-1,0}, \\
		(n+1+[\nabla^a \sigma]\nabla_a -\triangle^{-1/2}[\nabla^a \sigma][\nabla_a \triangle^{1/2}])v_{n,k} &+ \frac{(\Box_x - m^2) v_{n-1,k} }{2n} = \nonumber \\
		&\quad \frac{1}{n} \sum_{j=0}^{k-1}\psi_{j}v_{n-1,k-j-1}, 
		\end{align}
	\end{subequations}
	where we can verify that every $v_{n,k}$ depends on $\psi_{k-1}, \psi_{k-2}, \dots , \psi_0$ only. For $w_{n,k}$, we have
	\begin{subequations} \label{RecWni}
		\begin{align}
		\left( n + 1-\triangle^{-1/2}[\nabla^a\sigma ][\nabla_a \triangle^{1/2}]+ \nabla^a \sigma \nabla_a \right) w_{n,0} &= \nonumber \\
		- \frac{1}{2 n}\Big[2 (2n- \triangle^{-1/2}[\nabla^a\sigma ][\nabla_a \triangle^{1/2}] &+ \nabla^a \sigma \nabla_a) v_{n,0} + (\Box_x - m^2 ) w_{n-1,0} \Big] \\
		\left( n + 1-\triangle^{-1/2}[\nabla^a\sigma ][\nabla_a \triangle^{1/2}]+ \nabla^a \sigma \nabla_a \right) w_{n,k} &\, \nonumber \\
		+ \frac{1}{2 n}\Big[2 (2n- \triangle^{-1/2}[\nabla^a\sigma ][\nabla_a \triangle^{1/2}] + \nabla^a \sigma \nabla_a) v_{n,k} &+ (\Box_x - m^2) w_{n-1,k}\Big] = \\
		&\qquad - \frac{1}{n}\Big[ \sum_{j=0}^{k-1} \psi_{j} w_{n-1,k-j-1} \Big] ,
		\end{align}
	\end{subequations}
	which again implies that at every order $k$ in $\lambda$, each $w_{n,k}$ depends explicitly only on $\psi_{k-1}, \psi_{k-2}, \dots, \psi_0$, granted that each $w_{0,k}$ satisfies this condition too. This is trivially the case when $w_0=0$. At this point we are ready to justify that no dependence on $\psi_k$ is present in either $v_{n,k}, w_{n,k}$. Expand $v_{n,k}, w_{n,k}$ as covariant Taylor series,
	\begin{subequations} \label{TaylorSeriesVW}
		\begin{align}
		v_{n,k} (\x,\y)= \sum_{p = 0}^{\infty} v_{n,k,a_1 \dots a_p}(x) \nabla^{a_1} \sigma(\x,\y) \cdots \nabla^{a_p} \sigma (\x,\y),\\
		w_{n,k} (\x,\y)= \sum_{p = 0}^{\infty} w_{n,k,a_1 \dots a_p}(x) \nabla^{a_1} \sigma(\x,\y) \cdots \nabla^{a_p} \sigma (\x,\y),
		\end{align}
	\end{subequations}
	where the ($p-tensor$) coefficients $v_{n,k,a_1 \dots a_p}(x)$ are given by
	\begin{subequations}
		\begin{align}
		v_{n,k,0}(\x) &= \lim\limits_{y \rightarrow x} v_{n,k}(\x,\y),\\
		v_{n,k,a}(\x) &= \lim\limits_{y \rightarrow x} (\nabla_a)_{\y} v_{n,k}(\x,\y),\\
		v_{n,k,a_1 a_2}(\x) &= \frac{1}{2} \lim\limits_{y \rightarrow x} (\nabla_{a_1} \nabla_{a_2})_{\y} v_{n,k}(\x,\y),\\
		&\vdots
		\end{align}
	\end{subequations}
	which can be read directly from \eqref{RecVni} after application of the corresponding number of derivatives and considering the properties of $\sigma$, $\nabla_a \sigma$, $\nabla_a \nabla_b \sigma$ and $\Delta$ in the limit $y \rightarrow x$. In this limit, equations \eqref{RecVni} turn into algebraic equations, discarding any possible implicit dependence on $\psi_{k}$ for any term $v_{n,k,a_1 \dots a_p}$. Therefore $v_{n,k,a_1 \dots a_p}(x)$ can only depend on $\psi_{k-1}, \dots, \psi_0$.
	
	Once $v_{n,k}[\psi_{k-1},\dots,\psi_0]$ is computed, the procedure can be done in an analogue manner for for each $ w_{n,k,a_1 \dots a_p}(x)$ by means of \eqref{RecWni}, obtaining $w_{n,k}[\psi_{k-1},\dots,\psi_0]$ . 
	
	Then, reconstructing $H_{\ell\, k}^0$ we have
	\begin{subequations}
		\begin{align}
		H_{\ell,0}^0 (\x,\y)&= \frac{1}{(4\pi)^2}\left[\frac{\Delta^{1/2}(\x,\y)}{\sigma (\x,\y)} + \sum_{n=0}^{N}\sigma^n(\x,\y) \left\lbrace v_{n,0} (\x,\y) \ln\left(\frac{\sigma(\x,\y)}{\ell^2}\right) + w_{n,0} (\x,\y) \right\rbrace \right]\nonumber \\
		&\qquad + O(\sigma^{M+1}) \\
		H_{\ell,k}^0 (\x,\y)&= \frac{1}{(4\pi)^2}\sum_{n=0}^{N}\sigma^n(\x,\y) \left[v_{n,k}[\psi_{k-1},\dots,\psi_0] (\x,\y) \ln\left(\frac{\sigma(\x,\y)}{\ell^2}\right) \right. \nonumber\\
		&\qquad \left. + w_{n,k}[\psi_{k-1},\dots,\psi_0] (\x,\y)\right] + O(\sigma^{M+1})
		\end{align}
	\end{subequations}
	where we have now explicitly expressed the functional dependence of every one of the coefficients, completing the proof.
\end{proof}

Note that $H_{\ell,0}^0$ is the Hadamard fundamental solution for the \textit{decoupled} problem with $P_0=\Box - m^2$, ie, $\lambda=0$, as expected. The correction coefficients $H_{\ell,k}^0$, with $k>0$, does not have the form of Hadamard fundamental solutions, but still include the relevant singularities dependent on $\psi_{k-1}, \dots, \psi_0$, of the full \textit{coupled} problem.

%=====================================================================================================
% BIBLIOGRAPHY
%=====================================================================================================

\end{document}